\newtheorem{proposition}{Proposition}
\newtheorem*{theorem*}{Theorem}
\DeclareSIUnit\year{yr}
\DeclareSIUnit \parsec {pc}
\newcommand\numberthis{\addtocounter{equation}{1}\tag{\theequation}}
\newcommand{\afffias}{Frankfurt Institute for Advanced Studies (FIAS), Ruth-Moufang-Str.~1, 60438 Frankfurt am Main, Germany}
\newcommand{\affgoethe}{Physics Department, Goethe University, Max-von-Laue-Str.~1, 60438 Frankfurt am Main, Germany}
\newcommand{\affcharles}{Faculty of Mathematics and Physics,
Charles University, Ke Karlovu 3, 121 16 Praha 2, Czech Republic}
\date{\today}
\begin{document}

\title{Toward Singularity Theorems with Torsion}

\author{A.~van de Venn}
\email{venn@fias.uni-frankfurt.de}
\affiliation{\afffias}
\affiliation{\affgoethe}

\author{U.~Agarwal}
\email{ujjwal.agarwal@mff.cuni.cz}
\affiliation{\affcharles}

\author{D.~Vasak}
\email{vasak@fias.uni-frankfurt.de}
\affiliation{\afffias}

\begin{abstract}
This study examines the formulation of a singularity theorem for timelike curves including torsion, and establishes the foundational framework necessary for its derivation.
We begin by deriving the relative acceleration for an arbitrary congruence of timelike curves. The resulting ``deviation equation'' offers an alternative pathway to the well-known Raychaudhuri equation with torsion. Conjugate points are then introduced and analyzed in relation to the behavior of the scalar expansion. Together with the sensible requirement of hypersurface orthogonality, the Raychaudhuri equation is examined for several specific cases of torsion that are prominent in the literature.
Our findings indicate that a totally antisymmetric torsion tensor does not influence the behavior of the congruence of timelike curves. Finally, we formulate a singularity theorem for timelike curves and highlight the critical requirement of non-autoparallel curves.
\end{abstract}

\maketitle

\section{Introduction} 
Despite its tremendous success, Einstein's General Relativity (GR) is plagued by the appearance of singularities, as 
elucidated by the renowned Penrose-Hawking singularity
theorems \cite{hawking1975large,Penrose_Sing,Hawking_Bondi}.
Singularities are physically undesirable as they often signify
divergences in measurable quantities and the violation of
physical laws.
While there are ways to avoid, e.g., the Big Bang singularity
within the framework of GR \cite{Carballo-Rubio:2024rlr},
we are going to adopt a different approach here.

In Metric-Affine Gravity (MAG) the linear connection is
promoted to an independent dynamic variable and provides,
besides the pseudo-Riemannian metric, additional degrees of 
freedom. These additional degrees of freedom are referred to
as ``torsion'' and ``nonmetricity''.
Geometrically, torsion measures the non-closure of a parallelogram formed by parallel transporting two infinitesimal vectors along each other \cite{BeltranJimenez:2019esp}. Physically, torsion is believed
to be closely related to the spin-density of matter \cite{Hehl1,SHAPIRO2002113} and is able to account for the
effects of Dark Energy in the late Universe \cite{vandeVenn:2022gvl,Kirsch:2023iwd,Benisty:2021sul,Vasak:2022pii}.
Nonmetricity on the other hand causes lengths of vectors and
angles between vectors to change when performing a parallel 
transport. This geometrical picture is however not 
complemented by any promising physical significance yet.
Therefore we are going to restrict ourselves to the study
of a connection including torsion while neglecting any
nonmetricity contributions, i.e., we assume a metric
compatible connection.

The framework for establishing and proving singularity theorems is based on the Raychaudhuri equation. It 
determines how a ball of events within the congruence of 
curves contracts/expands as we move along the flow of the curves. A crucial aspect of the Raychaudhuri equation involves the imposition of energy conditions (ECs). These ECs are ad-hoc conditions based on our current interpretation of gravitational forces.
By applying these ECs to the conventional Raychaudhuri equation, it can be concluded that a collection of vorticity-free geodesics will converge to a focal point, ultimately leading to the formation of a spacetime singularity. However, the presence of torsion modifies the kinematics, potentially leading to different outcomes. Therefore, the study of kinematics in spacetimes with torsion provides insights into modified ECs, which may enhance our understanding of gravity and potentially avoid the formation of singularities \cite{Esposito:1990ub}.

The modified ECs we will derive are presented in what is referred to as the geometric form \cite{Curiel:2014zba}. Only in the context of the Einstein equations, this geometric form is directly connected to the energy-momentum tensor, as well as to the thermodynamic properties of matter, including energy density, pressure, and the equation of state. However, except for our later discussion of the Weyssenhoff fluid, we do not assume a specific gravitational model in this work. Consequently, a definitive physical energy interpretation for these conditions is not provided within this work.

This paper is organised as follows: \Cref{theor_frame}
introduces the basic definitions of our framework and establishes the conventions used throughout the paper.
The deviation equation for timelike curves is derived in
\Cref{dev_eq_sect} and culminates in the Raychaudhuri eq.
with torsion in \Cref{Raychaudh_sect}. We discuss the relation
between the conventional kinematic variables and their geometric counterparts in \Cref{Geom_var_sect}. 
The notion of conjugate points is generalised to the case
of arbitrary timelike curves in \Cref{Conj_points_sec}
and allows us, together with hypersurface orthogonality
(\Cref{Hypersur_orthog_sec}), to 
investigate particular torsion forms in \Cref{Case_stud_sec}.
Finally, we present and prove a singularity theorem for arbitrary timelike curves with torsion in \Cref{Sing_thm_sec}, and conclude the paper with a summary of our findings in \Cref{sect:Concl}.

Throughout this paper we employ natural units, in which
$\hbar=c=1$.
Furthermore, the metric is assumed to have signature
$(-+++)$, following the standard convention in \cite{Misner1973}.

\section{Theoretical framework}
\label{theor_frame}
In metric compatible MAG the connection is given by
\begin{equation}
    \Gamma\indices{^{\lambda}_{\mu\nu}} = \mathring{\Gamma}\indices{^{\lambda}_{\mu\nu}} + K\indices{^{\lambda}_{\mu\nu}},
    \label{affine_conn}
\end{equation}
where 
\begin{equation}
    \mathring{\Gamma}\indices{^{\lambda}_{\mu\nu}} =
    \frac{1}{2}g\indices{^{\lambda\rho}}\left(\partial\indices{_\nu}g\indices{_{\mu\rho}} + \partial\indices{_\mu}g\indices{_{\nu\rho}}
    - \partial\indices{_\rho}g\indices{_{\mu\nu}}\right)
\end{equation}
are the Christoffel symbols with $g\indices{_{\mu\nu}}$ being the metric, and 
\begin{equation}
    K\indices{^{\lambda}_{\mu\nu}} \coloneqq 
    g\indices{^{\lambda\rho}}\left(S\indices{_{\rho\mu\nu}} - S\indices{_{\mu\rho\nu}} - S\indices{_{\nu\rho\mu}}\right)
\end{equation}
is the contortion tensor defined in terms of the torsion
tensor $S\indices{^{\lambda}_{\mu\nu}} \coloneqq \Gamma\indices{^{\lambda}_{[\mu\nu]}}$.

Curvature is encoded within the Riemann-Cartan tensor
\begin{equation}
    R\indices{^\lambda_{\sigma\mu\nu}} \coloneqq \partial\indices{_\mu}\Gamma\indices{^{\lambda}_{\sigma\nu}} - \partial\indices{_\nu}\Gamma\indices{^{\lambda}_{\sigma\mu}} + \Gamma\indices{^{\lambda}_{\rho\mu}}\Gamma\indices{^{\rho}_{\sigma\nu}} - \Gamma\indices{^{\lambda}_{\rho\nu}}\Gamma\indices{^{\rho}_{\sigma\mu}},
\end{equation}
which may be split into its Levi-Civita contribution, 
denoted by a ring, and its torsional part using \eqref{affine_conn}:
\begin{align*}
    R\indices{^\lambda_{\sigma\mu\nu}} = &\mathring{R}\indices{^\lambda_{\sigma\mu\nu}}
    + \mathring{\nabla}\indices{_\mu}K\indices{^\lambda_{\sigma\nu}} - \mathring{\nabla}\indices{_\nu}K\indices{^\lambda_{\sigma\mu}} \\&+ K\indices{^\lambda_{\rho\mu}}K\indices{^\rho_{\sigma\nu}}
    - K\indices{^\lambda_{\rho\nu}}K\indices{^\rho_{\sigma\mu}}
    \numberthis.
\end{align*}

Covariant derivatives act on an arbitrary $(1,1)$ tensor $T$ as
\begin{equation}
    \nabla\indices{_\mu}T\indices{^\alpha_\beta} = 
    \partial\indices{_\mu}T\indices{^\alpha_\beta} + 
    \Gamma\indices{^\alpha_{\rho\mu}}T\indices{^\rho_\beta}
    - \Gamma\indices{^\rho_{\beta\mu}}T\indices{^\alpha_\rho},
\end{equation}
and accordingly on higher/lower rank tensors. As before, this 
expression may be split into
\begin{equation}
    \nabla\indices{_\mu}T\indices{^\alpha_\beta} = 
    \mathring{\nabla}\indices{_\mu}T\indices{^\alpha_\beta} + 
    K\indices{^\alpha_{\rho\mu}}T\indices{^\rho_\beta}
    - K\indices{^\rho_{\beta\mu}}T\indices{^\alpha_\rho}.
\end{equation}

With torsion present, the notions of autoparallel and geodesic
curves differ in general. A geodesic is a curve that 
minimises/maximises the path length
\begin{equation}
    S = \int\sqrt{-g\indices{_{\mu\nu}}\frac{\mathrm{d}x\indices{^\mu}}{\mathrm{d}\lambda}\frac{\mathrm{d}x\indices{^\nu}}{\mathrm{d}\lambda}}\mathrm{d}\lambda,
\end{equation}
where $x\indices{^\mu}(\lambda)$ denotes the curve in local 
coordinates and $\lambda$ is the curve parameter. The path
length $S$ is extremal if $\delta S = 0$, which leads to the
geodesic equation
\begin{equation}
    \frac{\mathrm{d}^2 x\indices{^\mu}}{\mathrm{d}\lambda^2}
    + \mathring{\Gamma}\indices{^\mu_{\alpha\beta}}\frac{\mathrm{d}x\indices{^\alpha}}{\mathrm{d}\lambda}\frac{\mathrm{d}x\indices{^\beta}}{\mathrm{d}\lambda} = 0.
\end{equation}
Notice that only the Levi-Civita part of the connection contributes to the geodesic equation.
An autoparallel curve on the other hand generalises the notion
of a straight line to curved space. Let $X\indices{^\mu}$
denote the tangent vector field to a curve, then the condition
for this curve to be an autoparallel reads $X\indices{^\alpha}\nabla\indices{_\alpha}X\indices{^\mu} = 0$. In local coordinates with $X\indices{^\mu} = \mathrm{d}x\indices{^\mu}/\mathrm{d}\lambda$, this condition
is equivalent to
\begin{equation}
    \frac{\mathrm{d}^2 x\indices{^\mu}}{\mathrm{d}\lambda^2}
    + \Gamma\indices{^\mu_{\alpha\beta}}\frac{\mathrm{d}x\indices{^\alpha}}{\mathrm{d}\lambda}\frac{\mathrm{d}x\indices{^\beta}}{\mathrm{d}\lambda} = 0.
\end{equation}
Evidently now the full connection, more specifically the symmetric part $\Gamma\indices{^\mu_{(\alpha\beta)}}$, has
to be taken into account. Thus torsion enters the autoparallel equation in general.
Only for a totally antisymmetric torsion tensor (or a vanishing torsion) do the notions of geodesics and 
autoparallels coincide.

\section{Deviation Equation}
\label{dev_eq_sect}
Let $\mathcal{M}$ be an $n$ dimensional Lorentzian manifold
with an independent metric compatible connection $\Gamma\indices{^\lambda_{\mu\nu}}$
and $\gamma(s)$ be a curve in $\mathcal{M}$ with tangent vectors
\begin{equation}
    {\hat{V}}_{\gamma(s)}^{\mu} \coloneqq \frac{\mathrm{d} {\gamma}\indices{^{\mu}}}{\mathrm{d} s}(s),
\end{equation}
defined at every point along the curve.
Let further $U\indices{^{\mu}}$ be a normalised timelike vector field,
i.e., $g(U,U) = U\indices{^{\mu}}U\indices{_{\mu}} = -1$ at every point where $U\indices{^{\mu}}$ is defined, see Fig.\,\ref{fig.2.1}.
We may create a one-parameter family of non-intersecting curves by moving each point on
$\gamma(s)$ along the integral curves of $U\indices{^{\mu}}$.
This family shall be denoted by $\gamma_s (\tau)$
\cite{hawking1975large}. In other words, for fixed $s$, the quantity 
$\gamma_s (\tau)$ constitutes a timelike curve parametrised by the 
affine parameter $\tau$ which coincides with the proper time.
We thus have
\begin{equation}
    U_{{\gamma_s}(\tau)}^{\mu} = \frac{\partial {\gamma_s}\indices{^{\mu}}}{\partial \tau}(\tau)
\end{equation}
at any point within the family of curves,
where ${\gamma_s}\indices{^{\mu}}(\tau)$ are the coordinates
of the curves in a local frame. We can extend the 
vector field $\hat{V}\indices{^{\mu}}$ across the whole family 
$\gamma_s (\tau)$ by pushing all $\hat{V}\indices{^{\mu}}$ along the integral curves to an arbitrary
parameter value of $\tau$.
For that matter we define the flow $\phi_{\tau}$ of $U\indices{^\mu}$ as usual by
\begin{equation}
    \phi_{\tau}(\gamma_s(0)) \coloneqq \gamma_s(\tau),
\end{equation}
where $\gamma_s(0) = \gamma(s)$.
It is well known that  the $\phi_{\tau}$ constitute a 
one-parameter family of diffeomorphisms. We then extend
$\hat{V}\indices{^{\mu}}$ to
\begin{equation}
    {\tilde{V}}_{\gamma_s (\tau)}
     \coloneqq (\phi_{\tau})_* \left({\hat{V}}_{\gamma(s)}\right),
     \label{ext_def}
\end{equation}
where $(\phi_{\tau})_*$ denotes the pushforward of the flow.
In other words
\begin{equation}
    {\tilde{V}}_{\gamma_s (\tau)}^\mu = \frac{\partial {\gamma_s}\indices{^{\mu}}}{\partial s} (\tau),
    \label{ext_res}
\end{equation}
see \Cref{sect_flow_calc} for details.
Our goal is to determine how the spacelike separation between neighbouring timelike curves changes as we move along them.
This is essentially described by the change of $\tilde{V}\indices{^{\mu}}$ along the curves. In general
however, $\tilde{V}\indices{^{\mu}}$ may have components
parallel to $U\indices{^{\mu}}$ which we do not care about.
A remedy is to project $\tilde{V}\indices{^{\mu}}$ onto 
a vector subspace orthogonal to $U\indices{^{\mu}}$ first
and consider the change of this orthogonal component only.
The projection to the orthogonal subspace is achieved with
the projection tensor
\begin{equation}
    P\indices{^\mu_\nu}
    \coloneqq \delta\indices{^\mu_\nu}
    + U\indices{^{\mu}}U\indices{_{\nu}}.
\end{equation}
We thus define the new \textit{deviation vector field},
\begin{equation}
    V\indices{^{\mu}} \coloneqq P\indices{^\mu_\nu}
    \tilde{V}\indices{^{\nu}},
\end{equation}
orthogonal to $U\indices{^{\mu}}$.

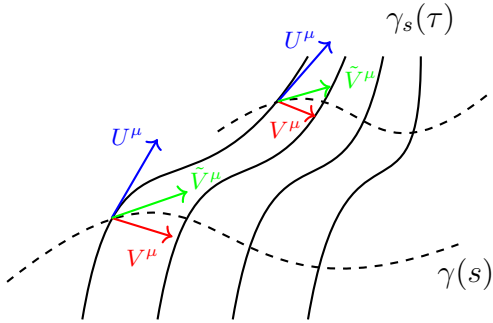
\begin{figure}[h]
	\begin{tikzpicture}
		\draw[thick] (4,-0.5) .. controls (4.5,2.5) and (5.5,0.5) .. (7,3);
        \draw[thick] (5,-0.5) .. controls (5.5,2.5) and (6.5,0.5) .. (7.5,3);
        \draw[thick] (6,-0.5) .. controls (6.5,2.5) and (7.5,0.5) .. (8,3);
        \draw[thick] (7,-0.5) .. controls (7.5,2.5) and (8.5,0.5) .. (8.5,3)
        node[xshift=0cm,yshift=0.5cm]{\large{$\gamma_s (\tau)$}};
        \draw[thick,dashed] (3,0) .. controls (6,2.4) and (5.5,-0.8) .. (9,0.5)
        node[xshift=0.1cm,yshift=-0.4cm]{\large{$\gamma (s)$}};
        \draw[thick,dashed] (5.8,2) .. controls (7.8,3.35) and (7.5,1) .. (9.4,2.5);
        \draw[thick,->,color = blue] (4.39,0.85) -- (5,1.9)
        node[xshift=-0.4cm]{$U\indices{^\mu}$};
        \draw[thick,->,color = red] (4.39,0.85) -- (5.2,0.6)
        node[xshift=-0.4cm,yshift=-0.3cm]{$V\indices{^\mu}$};
        \draw[thick,->,color = green] (4.39,0.85) -- (5.4,1.2)
        node[xshift=0.25cm,yshift=0.15cm]{$\tilde{V}\indices{^\mu}$};
        \draw[thick,->,color = blue] (6.6,2.4) -- (7.3,3.2)
        node[xshift=-0.4cm]{$U\indices{^\mu}$};
        \draw[thick,->,color = red] (6.6,2.4) -- (7.1,2.2)
        node[yshift=-0.2cm,xshift=-0.4cm]{$V\indices{^\mu}$};
        \draw[thick,->,color = green] (6.6,2.4) -- (7.3,2.6)
        node[yshift=0.1cm,xshift=0.4cm]{$\tilde{V}\indices{^\mu}$};
	\end{tikzpicture}
	\caption{The family of timelike curves $\gamma_s (\tau)$ with 
    tangent vector field $U\indices{^\mu}$ and deviation vector field $V\indices{^\mu}$}
	\label{fig.2.1}
\end{figure}

To determine how $V\indices{^{\mu}}$ changes along the curves, we first note that 
the Lie derivative of $\tilde{V}$ along $U$ vanishes
\begin{equation}
    \mathcal{L}_U \tilde{V} \overset{\text{def}}{=} \left.\frac{\mathrm{d}}{\mathrm{d}\tau}\left((\phi_\tau)^* (\tilde{V})\right)\right\rvert_{\tau=0} = 0,
\end{equation}
where $(\phi_\tau)^* = (\phi_\tau^{-1})_*$ denotes the pullback,
since 
\begin{equation}
    (\phi_\tau)^* (\tilde{V}) = {\hat{V}}
\end{equation}
is independent of $\tau$.
But $\mathcal{L}_U \tilde{V} = [U,\tilde{V}]$ and
therefore we have that 
\begin{align*}
    0 &= [U,\tilde{V}]\indices{^\mu}\\
    &= U\indices{^\alpha}\partial
    \indices{_\alpha} \tilde{V}\indices{^\mu} - 
    \tilde{V}\indices{^\alpha}\partial
    \indices{_\alpha} U\indices{^\mu}\\
    &= U\indices{^\alpha}\nabla
    \indices{_\alpha} \tilde{V}\indices{^\mu} - 
    \tilde{V}\indices{^\alpha}\nabla
    \indices{_\alpha} U\indices{^\mu} + 2 U\indices{^\alpha}\tilde{V}\indices{^\beta}\Gamma\indices{^\mu_{
    [\alpha\beta]}}\\
    &= U\indices{^\alpha}\nabla
    \indices{_\alpha} \tilde{V}\indices{^\mu} - 
    \tilde{V}\indices{^\alpha}\nabla
    \indices{_\alpha} U\indices{^\mu} + 2 U\indices{^\alpha}\tilde{V}\indices{^\beta} S\indices{^\mu_{
    \alpha\beta}}, \numberthis
    \label{commutator1}
\end{align*}
and hence
\begin{equation}
    U\indices{^\alpha}\nabla
    \indices{_\alpha} \tilde{V}\indices{^\mu} = 
    \tilde{V}\indices{^\alpha}\nabla
    \indices{_\alpha} U\indices{^\mu} + 2 U\indices{^\alpha}\tilde{V}\indices{^\beta} S\indices{^\mu_{
    \beta\alpha}}.
    \label{cov_exch}
\end{equation}
To proceed, note that 
\begin{equation}
    U\indices{^\beta}\nabla\indices{_\beta}
    P\indices{^\rho_\lambda}=
    A\indices{^\rho}U\indices{_\lambda} + A\indices{_\lambda}
    U\indices{^\rho},
    \label{proj_id}
\end{equation}
where $A\indices{^\mu} \coloneqq U\indices{^\beta}\nabla\indices{_\beta} U\indices{^\mu}$
is the autoparallel acceleration.
Now multiplying \eqref{cov_exch} by $P\indices{^\nu_\mu}$
and summing over $\mu$, we find, after some algebra and 
with \eqref{proj_id}, that
\begin{align*}
    U\indices{^\alpha}\nabla
    \indices{_\alpha} V\indices{^\mu} =& 
    A\indices{^\mu}U\indices{_\alpha}\tilde{V}\indices{^\alpha}
    + U\indices{^\mu}A\indices{_\alpha}\tilde{V}\indices{^\alpha}+
    \tilde{V}\indices{^\alpha}\nabla
    \indices{_\alpha} U\indices{^\mu} \\&+ 2 U\indices{^\alpha}\tilde{V}\indices{^\beta} 
    P\indices{^\mu_\nu}
    S\indices{^\nu_{
    \beta\alpha}}.\numberthis
    \label{cov_exch_V}
\end{align*}
After projecting and using
$A\indices{^\lambda}U\indices{_\alpha}
    \tilde{V}\indices{^\alpha} + \tilde{V}\indices{^\alpha}\nabla
    \indices{_\alpha} U\indices{^\lambda} = 
    V\indices{^\alpha}\nabla
    \indices{_\alpha} U\indices{^\lambda}$
this reads
\begin{equation}
    P\indices{^\lambda_\mu}U\indices{^\alpha}\nabla
    \indices{_\alpha} V\indices{^\mu} =
    V\indices{^\alpha}\nabla
    \indices{_\alpha} U\indices{^\lambda} + 2 U\indices{^\alpha}\tilde{V}\indices{^\beta} 
    P\indices{^\lambda_\nu}
    S\indices{^\nu_{
    \beta\alpha}}
    \label{cov_exch_V_proj}
\end{equation}
and tells us how the spacelike separation of neighbouring curves changes along the curves.

We may go one step further and determine how 
$P\indices{^\lambda_\mu}U\indices{^\alpha}\nabla
    \indices{_\alpha} V\indices{^\mu}$
changes along the curves, which leads us to the 
\textit{acceleration} of nearby timelike curves
\begin{equation}
    a\indices{^\mu} \coloneqq P\indices{^\mu_\rho}U\indices{^\beta}\nabla\indices{_\beta}\left(P\indices{^\rho_\lambda}U\indices{^\alpha}\nabla
    \indices{_\alpha} V\indices{^\lambda}\right).
    \label{accel_def}
\end{equation}
Plugging in \eqref{cov_exch_V_proj}, employing
\eqref{cov_exch}, and using
\begin{equation}
    \nabla\indices{_\alpha}\nabla\indices{_\beta} C\indices{^\mu} = \nabla\indices{_\beta}\nabla\indices{_\alpha} C\indices{^\mu} + R\indices{^\mu_{\lambda\alpha\beta}}
    C\indices{^\lambda} + 2S\indices{^\lambda_{\alpha\beta}}
    \nabla\indices{_\lambda}C\indices{^\mu}
    \label{cov_deriv_comm}
\end{equation}
for any vector $C\indices{^\mu}$,
we finally find the deviation equation
for timelike curves
\begin{align*}
    a\indices{^\mu} = &R\indices{^\mu_{\lambda\alpha\beta}}U\indices{^\lambda}
    U\indices{^\alpha}V\indices{^\beta} + 
    4 S\indices{^\lambda_{\beta\alpha}}U\indices{_\lambda}
    U\indices{^\alpha}\tilde{V}\indices{^\beta}
    A\indices{^\mu}\\ &+
    A\indices{^\mu}A\indices{_\alpha}V\indices{^\alpha}
    + V\indices{^\alpha}P\indices{^\mu_\rho}\nabla\indices{_\alpha}A\indices{^\rho}\\
    &+2P\indices{^\mu_\lambda}\Bigl[U\indices{^\alpha}
    \tilde{V}\indices{^\rho}
    U\indices{^\beta}\nabla\indices{_\beta}
    S\indices{^\lambda_{\rho\alpha}}
    + U\indices{^\alpha}S\indices{^\lambda_{\rho\alpha}}
    \tilde{V}\indices{^\beta}
    \nabla\indices{_\beta}U\indices{^\rho} \\
    &+ 2U\indices{^\alpha}U\indices{^\beta}
    \tilde{V}\indices{^\xi}S\indices{^\lambda_{\rho\alpha}}
    S\indices{^\rho_{\xi\beta}}
    + S\indices{^\lambda_{\rho\alpha}}\tilde{V}\indices{^\rho}
    A\indices{^\alpha}\Bigr]. \numberthis
    \label{dev_eq}
\end{align*}

In the case of a vanishing torsion and autoparallel curves ($A\indices{^\mu} = 0$), we recover the known geodesic deviation
equation $a\indices{^\mu} = \mathring{R}\indices{^\mu_{\lambda\alpha\beta}}U\indices{^\lambda}
U\indices{^\alpha}V\indices{^\beta}$ from standard GR.

\section{Raychaudhuri Equation with Torsion}
\label{Raychaudh_sect}
We can make use of \eqref{dev_eq} to get an evolution equation
for $\nabla\indices{_\nu} U\indices{^\mu}$. This will naturally lead
us to the Raychaudhuri equation.

Compare \eqref{dev_eq} with an intermediate result of the deviation equation where the definition of the Riemann-Cartan tensor has not been employed yet, and solve for the covariant derivative of 
$\nabla\indices{_\nu} U\indices{^\mu}$ along the curves.
One finds
\begin{align*}
    P\indices{^\mu_\lambda} V\indices{^\alpha} U\indices{^\beta}
    \nabla\indices{_\beta}\left(\nabla\indices{_\alpha}
    U\indices{^\lambda}\right) = &- \tilde{V}\indices{^\beta}
    \left(\nabla\indices{_\beta}U\indices{^\alpha}\right)\left(
    \nabla\indices{_\alpha}U\indices{^\mu}\right)\\
    &- 2 U\indices{^\beta}\tilde{V}\indices{^\lambda}S\indices{^\alpha_{
    \lambda\beta}}\nabla\indices{_\alpha}U\indices{^\mu}\\
    &+ R\indices{^\mu_{\alpha\beta\lambda}}U\indices{^\alpha}
    U\indices{^\beta}V\indices{^\lambda}
    + P\indices{^\mu_\rho}V\indices{^\alpha}\nabla\indices{_\alpha}A\indices{^\rho}\\
    &- U\indices{_\beta}\tilde{V}\indices{^\beta}
    A\indices{^\alpha}\nabla\indices{_\alpha}U\indices{^\mu}.
    \numberthis
\end{align*}
After using \eqref{proj_id} and realising that there is an overall
arbitrary contraction with $\tilde{V}$, we arrive, after some manipulations, at 
the evolution equation for $\nabla\indices{_\nu} U\indices{^\mu}$
\begin{align*}
    \frac{\mathrm{D}}{\mathrm{D}\tau}\left(\nabla\indices{_\nu} U\indices{^\mu}\right) = &-\left(\nabla\indices{_\nu} U\indices{^\alpha}\right)
    \left(\nabla\indices{_\alpha} U\indices{^\mu}\right)
    - 2U\indices{^\alpha}S\indices{^\beta_{\nu\alpha}}
    \nabla\indices{_\beta} U\indices{^\mu}\\
    &+ R\indices{^\mu_{\alpha\beta\nu}}U\indices{^\alpha}
    U\indices{^\beta} + \nabla\indices{_\nu}A\indices{^\mu},
    \numberthis
    \label{evol_eq}
\end{align*}
where $\mathrm{D}/\mathrm{D}\tau \coloneqq U\indices{^\beta}
\nabla\indices{_\beta}$. We may further express 
\begin{equation}
    \nabla\indices{_\nu} U\indices{_\mu} \equiv \mathrm{D}\indices{_\nu}U\indices{_\mu}
    + \left(\delta\indices{_\nu^\alpha}\delta\indices{_\mu^\beta} -
    P\indices{_\nu^\alpha}P\indices{_\mu^\beta}\right)
    \nabla\indices{_\alpha} U\indices{_\beta},
\end{equation}
where we defined 
\begin{equation}
    \mathrm{D}\indices{_\nu}C\indices{_\mu} \coloneqq
P\indices{_\nu^\alpha}P\indices{_\mu^\beta}
\nabla\indices{_\alpha} C\indices{_\beta}
\end{equation}
for an arbitrary covector $C\indices{_\mu}$. It turns out that
$\left(\delta\indices{_\nu^\alpha}\delta\indices{_\mu^\beta} -
P\indices{_\nu^\alpha}P\indices{_\mu^\beta}\right)
\nabla\indices{_\alpha} U\indices{_\beta} = -U\indices{_\nu}
A\indices{_\mu}$ and thus 
\begin{equation}
    \nabla\indices{_\nu} U\indices{_\mu} = \mathrm{D}\indices{_\nu}U\indices{_\mu}
    -U\indices{_\nu}A\indices{_\mu}.
\end{equation}
We choose to split 
\begin{align*}
    \mathrm{D}\indices{_\nu}U\indices{_\mu} &= 
    \mathrm{D}\indices{_{(\nu}}U\indices{_{\mu)}} +
    \mathrm{D}\indices{_{[\nu}}U\indices{_{\mu]}}\\
    &= \mathrm{D}\indices{_{(\nu}}U\indices{_{\mu)}} -\frac{1}{n-1}\theta P\indices{_{\mu\nu}} + \frac{1}{n-1}\theta P\indices{_{\mu\nu}} +
    \mathrm{D}\indices{_{[\nu}}U\indices{_{\mu]}},\numberthis
\end{align*}
where 
\begin{equation}
    \theta \coloneqq \mathrm{D}\indices{^\mu}U\indices{_\mu} = \nabla\indices{^\mu}U\indices{_\mu}
\end{equation} 
is the \textit{volume scalar}/\textit{scalar expansion} describing the mean separation between curves 
and $n$ is the dimension of our manifold. Furthermore, we define
the symmetric and trace-free \textit{shear}
\begin{equation}
    \sigma\indices{_{\mu\nu}} \coloneqq \mathrm{D}\indices{_{(\nu}}U\indices{_{\mu)}} -\frac{1}{n-1}\theta P\indices{_{\mu\nu}},
\end{equation}
which measures any anisotropic evolution of curves, and the 
antisymmetric \textit{vorticity}
\begin{equation}
    \omega\indices{_{\mu\nu}} \coloneqq \mathrm{D}\indices{_{[\nu}}U\indices{_{\mu]}},
\end{equation}
describing rotational behaviour. Thereby we obtain the following split:
\begin{equation}
    \nabla\indices{_\nu} U\indices{_\mu} = \frac{1}{n-1}\theta P\indices{_{\mu\nu}} + \sigma\indices{_{\mu\nu}} +
    \omega\indices{_{\mu\nu}}
    -U\indices{_\nu}A\indices{_\mu},
    \label{nab_U_split}
\end{equation}
which further implies that 
\begin{align*}
    \nabla\indices{_\nu} A\indices{_\mu} &=
    \mathrm{D}\indices{_{\nu}}A\indices{_{\mu}} + \frac{1}{n-1}\theta
    U\indices{_\mu}A\indices{_\nu} + U\indices{_\mu}\left(
    \sigma\indices{_{\nu\alpha}} - \omega\indices{_{\nu\alpha}}\right)
    A\indices{^\alpha} \\
    &\hphantom{=}- U\indices{^\alpha}\nabla\indices{_\alpha}\left(
    U\indices{_\nu}A\indices{_\mu}\right) + A\indices{_\nu}
    A\indices{_\mu}. \numberthis
\end{align*}
Therefore we may reframe the evolution equation \eqref{evol_eq} as
\begin{align*}
    \frac{\mathrm{D}}{\mathrm{D}\tau}\left(\nabla\indices{_\nu} U\indices{_\mu}\right) = &- \frac{1}{(n-1)^2}\theta^2P\indices{_{\mu\nu}} + R\indices{_\mu_{\alpha\beta\nu}}U\indices{^\alpha}
    U\indices{^\beta}\\ &- \frac{2}{n-1}\theta\left(\sigma\indices{_{\mu\nu}} + \omega\indices{_{\mu\nu}}\right) - \sigma\indices{_{\alpha\mu}}
    \sigma\indices{_\nu^\alpha} - \omega\indices{_{\alpha\mu}}
    \omega\indices{_\nu^\alpha}\\
    &+ 2\sigma\indices{_{\alpha[\mu}}\omega\indices{_{\nu]}^\alpha}
    + \mathrm{D}\indices{_\nu}A\indices{_\mu} + 
    \frac{2}{n-1}\theta U\indices{_{(\mu}}A\indices{_{\nu)}}\\
    &+ 2 U\indices{_{(\mu}}\sigma\indices{_{\nu)\alpha}}
    A\indices{^\alpha} - 2U\indices{_{[\mu}}\omega\indices{_{\nu]\alpha}}
    A\indices{^\alpha} + A\indices{_\mu}A\indices{_\nu}\\
    &- U\indices{^\alpha}\nabla\indices{_\alpha}
    \left(A\indices{_\mu}U\indices{_\nu}\right) - \frac{2}{n-1}
    \theta S\indices{_{\mu\nu\alpha}}U\indices{^\alpha}\\
    &+ 2\left(\frac{1}{n-1}\theta U\indices{_\mu} - A\indices{_\mu}\right)U\indices{^\alpha}U\indices{^\beta}
    S\indices{_{\alpha\beta\nu}}\\
    &+ 2\left(\sigma\indices{_\mu^\beta} + \omega\indices{_\mu^\beta}\right)
    U\indices{^\alpha}S\indices{_{\beta\alpha\nu}}.
    \numberthis
    \label{evol_eq_split}
\end{align*}
Equation \eqref{evol_eq_split} encodes the full information
of the evolution of $\nabla\indices{_\nu} U\indices{_\mu}$
along the flow of the curves. We can now choose to split 
this information into several parts based on
\eqref{nab_U_split}. This provides us with three evolution
equations for each of the variables $\theta$, $\omega\indices{_{\mu\nu}}$ and $\sigma\indices{_{\mu\nu}}$.

The evolution equation for the scalar expansion $\theta$ is obtained after taking the trace of \eqref{evol_eq_split}. This is the well-known Raychaudhuri
equation in $n$ dimensions with torsion for general timelike curves
\begin{align*}
    \frac{\mathrm{d}}{\mathrm{d}\tau}\theta = &-\frac{1}{n-1}\theta^2
    -R\indices{_{(\mu\nu)}}U\indices{^\mu}U\indices{^\nu}
    -\sigma\indices{_{\mu\nu}}\sigma\indices{^{\mu\nu}}
    +\omega\indices{_{\mu\nu}}\omega\indices{^{\mu\nu}}\\
    &+ \mathrm{D}\indices{_{\mu}}A\indices{^\mu} + 
    A\indices{_\mu}A\indices{^\mu} + \frac{2}{n-1}\theta 
    S\indices{_\mu}U\indices{^\mu} - 2 S\indices{_{\mu\nu\rho}}
    U\indices{^\mu}U\indices{^\nu}A\indices{^\rho}\\
    &- 2 S\indices{_{\mu\nu\rho}}
    \sigma\indices{^{\mu\nu}}U\indices{^\rho} +
    2 S\indices{_{\mu\nu\rho}}
    \omega\indices{^{\mu\nu}}U\indices{^\rho}, \numberthis
    \label{Raychaudhuri_eq}
\end{align*}
where $S\indices{_\mu} \coloneqq S\indices{^\alpha_{\mu\alpha}}$ and $\mathrm{d}/\mathrm{d}\tau \coloneqq U\indices{^\beta}\partial\indices{_\beta}$,
in accordance with \cite{Pasmatsiou:2016bfv, Iosifidis:2018diy}. Note that the covariant derivative 
$\mathrm{D}/\mathrm{D}\tau$
reduced to the ordinary derivative $\mathrm{d}/\mathrm{d}\tau$
since their action on a scalar field, here $\theta$, is 
equivalent.

The evolution equation for the vorticity is found by antisymmetrising
\eqref{evol_eq_split}:
\begin{align*}
    \frac{\mathrm{D}}{\mathrm{D}\tau}\omega\indices{_{\mu\nu}} = 
    & R\indices{_{[\mu|\alpha\beta|\nu]}}U\indices{^\alpha}
    U\indices{^\beta} - \frac{2}{n-1}\theta \omega\indices{_{\mu\nu}}
    + 2\sigma\indices{_{\alpha[\mu}}\omega\indices{_{\nu]}^\alpha}\\
    &+ \mathrm{D}\indices{_{[\nu}}A\indices{_{\mu]}}
    - 2U\indices{_{[\mu}}\omega\indices{_{\nu]\alpha}}
    A\indices{^\alpha} - \frac{2}{n-1}
    \theta S\indices{_{[\mu\nu]\alpha}}U\indices{^\alpha}\\
    &+ 2\left(\frac{1}{n-1}\theta U\indices{_{[\mu}} - A\indices{_{[\mu}}\right)U\indices{^\alpha}U\indices{^\beta}
    S\indices{_{|\alpha\beta|\nu]}}\\
    &+ 2\left(\sigma\indices{_{[\mu}^\beta} + \omega\indices{_{[\mu}^\beta}\right)
    U\indices{^\alpha}S\indices{_{|\beta\alpha|\nu]}}.
    \numberthis 
    \label{vorticity_eq}
\end{align*}

We define the symmetric trace-free operation ``$\langle \hphantom{.}\rangle$'' on any index pair
of an arbitrary tensor $C$ of rank greater or equal 2 as
\begin{equation}
    C\indices{_{\langle\mu\nu\rangle}} \coloneqq C\indices{_{(\mu\nu)}} - 
    \frac{1}{n-1}P\indices{_{\mu\nu}}C\indices{^\alpha_\alpha}.
\end{equation}
Then 
\begin{equation}
    \nabla\indices{_{\langle\nu}}U\indices{_{\mu\rangle}} = 
    \sigma\indices{_{\mu\nu}} - U\indices{_{\langle\nu}}
    A\indices{_{\mu\rangle}}.
\end{equation}
Acting with this operation on \eqref{evol_eq_split} results
in the evolution equation for the shear
\begin{align*}
    \frac{\mathrm{D}}{\mathrm{D}\tau}\sigma\indices{_{\mu\nu}} = 
    & R\indices{_{\langle\mu|\alpha\beta|\nu\rangle}}U\indices{^\alpha}
    U\indices{^\beta} - \frac{2}{n-1}\theta \sigma\indices{_{\mu\nu}}
    - \sigma\indices{_{\alpha\langle\mu}}
    \sigma\indices{_{\nu\rangle}^\alpha}\\
    &- \omega\indices{_{\alpha\langle\mu}}
    \omega\indices{_{\nu\rangle}^\alpha} + \mathrm{D}\indices{_{\langle\nu}}A\indices{_{\mu\rangle}}
    + \frac{2}{n-1}\theta U\indices{_{(\mu}}A\indices{_{\nu)}}\\
    &+ 2 U\indices{_{(\mu}}\sigma\indices{_{\nu)\alpha}}
    A\indices{^\alpha} + A\indices{_{\langle\mu}}A\indices{_{\nu\rangle}}
    - \frac{2}{n-1}
    \theta S\indices{_{\langle\mu\nu\rangle\alpha}}U\indices{^\alpha}\\
    &+ 2\left(\frac{1}{n-1}\theta U\indices{_{\langle\mu}} - A\indices{_{\langle\mu}}\right)U\indices{^\alpha}U\indices{^\beta}
    S\indices{_{|\alpha\beta|\nu\rangle}}\\
    &+ 2\left(\sigma\indices{_{\langle\mu}^\beta} + \omega\indices{_{\langle\mu}^\beta}\right)
    U\indices{^\alpha}S\indices{_{|\beta\alpha|\nu\rangle}}.
    \numberthis 
    \label{shear_eq}
\end{align*}

\section{Geometric Kinematic Variables}
\label{Geom_var_sect}
The previous sections are based on a particular definition
of the kinematic variables $\theta$, $\sigma\indices{_{\mu\nu}}$, and $\omega\indices{_{\mu\nu}}$
in terms of a decomposition of $\nabla\indices{_\nu}
U\indices{_\mu}$. This choice is used in a number of references such as \cite{Pasmatsiou:2016bfv, Iosifidis:2018diy} and roots in the direct analogy to
the torsionless case. However, as evident from \eqref{cov_exch_V}, the complete change of the separation
vector along the curves is not fully determined by
$\nabla\indices{_\nu}U\indices{_\mu}$ but there are 
further components such as torsion and the autoparallel acceleration
$A\indices{^\mu}$. Hence there is a point to be made for 
decomposing not just $\nabla\indices{_\nu}U\indices{_\mu}$
but the whole r.h.s. of \eqref{cov_exch_V} into its 
trace, symmetric tracefree and antisymmetric part.
This path is taken in, e.g., \cite{Luz:2017ldh,Luz:2019frs,Luz:2019kmm}.
Here we shall introduce these so-called geometric kinematic variables,
henceforth labelled by a ``(g)'' superscript,
and relate them to the ones used before.

We can express \eqref{cov_exch_V} as
\begin{equation}
    U\indices{^\beta}\nabla
    \indices{_\beta} V\indices{^\mu} =
    V\indices{^\beta}\nabla
    \indices{_\beta} U\indices{^\mu}
    + U\indices{^\mu}A\indices{_\beta}V\indices{^\beta}
    + 2 U\indices{^\alpha}V\indices{^\beta} 
    P\indices{^\mu_\nu}
    S\indices{^\nu_{
    \beta\alpha}},
\end{equation}
or, after projecting, as
\begin{equation}
    P\indices{^\mu_\lambda}U\indices{^\beta}\nabla
    \indices{_\beta} V\indices{^\lambda} = B\indices{_\beta^\mu}
    V\indices{^\beta}
\end{equation}
with
\begin{align*}
    B\indices{_\beta^\mu} &\coloneqq
    P\indices{^\mu_\lambda}P\indices{_\beta^\rho}\Bigl(
    \nabla\indices{_\rho} U\indices{^\lambda}
    + U\indices{^\lambda}A\indices{_\rho}    
    + 2 U\indices{^\alpha}
    P\indices{^\lambda_\nu}S\indices{^\nu_{\rho\alpha}}
    \Bigr)\\
    &\hphantom{:}= \mathrm{D}\indices{_\beta}U\indices{^\mu}
    + 2U\indices{^\alpha}P\indices{^\mu_\nu}
    S\indices{^\nu_{\beta\alpha}}.
    \numberthis
\end{align*}
Now instead of just decomposing $\mathrm{D}\indices{_\beta}U\indices{^\mu}$, we decompose the 
whole tensor $B\indices{_\beta^\mu}$ as
\begin{equation}
    B\indices{_{\nu\mu}} = \frac{1}{n-1}\theta^{(g)}
    P\indices{_{\mu\nu}} + {\sigma^{(g)}}\indices{_{\mu\nu}}
    + {\omega^{(g)}}\indices{_{\mu\nu}},
\end{equation}
where
\begin{align*}
    \theta^{(g)} &\coloneqq B\indices{_\alpha^\alpha}\\
    {\sigma^{(g)}}\indices{_{\mu\nu}} &\coloneqq
    B\indices{_{(\nu\mu)}} - \frac{1}{n-1}\theta^{(g)}
    P\indices{_{\mu\nu}}\\
    {\omega^{(g)}}\indices{_{\mu\nu}} &\coloneqq
    B\indices{_{[\nu\mu]}}.\numberthis
\end{align*}
These are related to the previously introduced variables by direct computation
\begin{align*}
    \theta^{(g)} &= \theta - 2U\indices{^\alpha}S\indices{_\alpha}\\
    {\sigma^{(g)}}\indices{_{\mu\nu}} &= \sigma\indices{_{\mu\nu}} + 2U\indices{^\alpha}P\indices{_{\langle\mu|\beta}}
    S\indices{^\beta_{|\nu\rangle\alpha}}\\
    {\omega^{(g)}}\indices{_{\mu\nu}} &=
    \omega\indices{_{\mu\nu}} +
    2U\indices{^\alpha}P\indices{_{[\mu|\beta}}
    S\indices{^\beta_{|\nu]\alpha}}.
    \numberthis
    \label{geom_vs_old}
\end{align*}
Writing the evolution equations in terms of the geometric
variables or in the old variables is a matter of taste.
The torsional terms which are absorbed in the geometric
variables in \eqref{geom_vs_old} appear explicitly on the 
r.h.s. of the evolution equations in the previous section,
while in terms of geometric variables only appear implicitly
within them. The evolution equations in the geometric variables therefore look different. However they include 
exactly the same physical information.

Nevertheless, our later investigations of the Raychaudhuri equation must be conducted in terms of the geometric variables. The explicit form of the equation is crucial for establishing appropriate ECs and ultimately determining the formation of focal points.
Therefore, in the following we shall proceed to work
with the Raychaudhuri equation in geometric variables
\begin{align*}
    \frac{\mathrm{d}}{\mathrm{d}\tau}\theta^{(g)} = &-\frac{1}{n-1}{\theta^{(g)}}^2
    -R\indices{_{(\mu\nu)}}U\indices{^\mu}U\indices{^\nu}
    -{\sigma^{(g)}}\indices{_{\mu\nu}}{\sigma^{(g)}}\indices{^{\mu\nu}}\\
    &+{\omega^{(g)}}\indices{_{\mu\nu}}{\omega^{(g)}}\indices{^{\mu\nu}}
    + \mathrm{D}\indices{_{\mu}}A\indices{^\mu} + 
    A\indices{_\mu}A\indices{^\mu}\\
    &- \frac{2}{n-1}\theta^{(g)} 
    S\indices{_\mu}U\indices{^\mu}
    - 2 S\indices{_{\mu\nu\rho}}
    U\indices{^\mu}U\indices{^\nu}A\indices{^\rho}\\
    &+ 2 S\indices{_{\mu\nu\rho}}
    {\sigma^{(g)}}\indices{^{\mu\nu}}U\indices{^\rho}
    -
    2 S\indices{_{\mu\nu\rho}}
    {\omega^{(g)}}\indices{^{\mu\nu}}U\indices{^\rho}\\
    &-2\frac{\mathrm{D}}{\mathrm{D}\tau}\left(
    U\indices{^\alpha}S\indices{_\alpha}\right), \numberthis
    \label{Geom_Raychaudhuri_eq}
\end{align*}
which is in agreement with \cite{Luz:2017ldh}. Note that different index placements in the definitions of the covariant derivative and the vorticity tensor result in the fact that our torsion and vorticity tensors admit a relative minus sign compared to the torsion and vorticity defined in \cite{Luz:2017ldh}.

\section{Conjugate Points}
\label{Conj_points_sec}
The notion of conjugate points, to be defined below,
is a vital component in establishing and proving the singularity theorem later on. To discuss them, we have to 
revisit the deviation equation \eqref{dev_eq}:
\begin{align*}      
    a\indices{^\mu} = &R\indices{^\mu_{\lambda\alpha\beta}}U\indices{^\lambda}
    U\indices{^\alpha}V\indices{^\beta}
    + 4S\indices{^\lambda_{\beta\alpha}}U\indices{_\lambda}
    U\indices{^\alpha}V\indices{^\beta}A\indices{^\mu}\\
    &+ A\indices{^\mu}A\indices{_\beta}V\indices{^\beta}
    +P\indices{^\mu_\lambda}V\indices{^\beta}\nabla\indices{_\beta}A\indices{^\lambda}\\
    &+  
    2P\indices{^\mu_\lambda}\Bigl[U\indices{^\alpha}
    V\indices{^\beta}U\indices{^\rho}\nabla\indices{_\rho}
    S\indices{^\lambda_{\beta\alpha}}
    + U\indices{^\alpha}S\indices{^\lambda_{\rho\alpha}}
    V\indices{^\beta}
    \nabla\indices{_\beta}U\indices{^\rho}\\
    &+ 2U\indices{^\alpha}U\indices{^\xi}
    V\indices{^\beta}S\indices{^\lambda_{\rho\alpha}}
    S\indices{^\rho_{\beta\xi}} + 
    S\indices{^\lambda_{\beta\alpha}}
    V\indices{^\beta}A\indices{^\alpha}\Bigr]. \numberthis
    \label{Jacobi_eq_1}
\end{align*}
Notice that here we expressed \eqref{dev_eq} solely
in terms of $V\indices{^\beta}$, instead of the 
unprojected $\tilde{V}\indices{^\beta}$. The reason
for this will become clear soon.
We further have the definition \eqref{accel_def}, which may be reframed as
\begin{align*}
    a\indices{^\mu} = &\frac{\mathrm{D}^2}{\mathrm{D}\tau^2}V\indices{^\mu} + 2U\indices{^\mu}
    U\indices{^\lambda}V\indices{^\beta}\nabla\indices{_\beta}
    A\indices{_\lambda}
    -4U\indices{^\mu}U\indices{^\alpha}A\indices{_\lambda}
    S\indices{^\lambda_{\beta\alpha}}V\indices{^\beta}\\
    &- U\indices{^\mu}V\indices{^\beta}U\indices{^\alpha}
    \nabla\indices{_\alpha}A\indices{_\beta}
    -A\indices{^\mu}V\indices{^\beta}A\indices{_\beta}.
\end{align*}
Together with \eqref{Jacobi_eq_1} we thus find
\begin{align*}
    \frac{\mathrm{D}^2}{\mathrm{D}\tau^2}V\indices{^\mu} = 
    &\Bigl[R\indices{^\mu_{\lambda\alpha\beta}}U\indices{^\lambda}
    U\indices{^\alpha}
    + 4S\indices{^\lambda_{\beta\alpha}}U\indices{^\alpha}(U\indices{_\lambda}
    A\indices{^\mu} + A\indices{_\lambda}U\indices{^\mu})\\
    &+ 2A\indices{^\mu}A\indices{_\beta}
    + (\nabla\indices{_\beta}A\indices{^\mu})\\
    &+ U\indices{^\mu}(U\indices{^\alpha}\{\nabla\indices{_\alpha}A\indices{_\beta}\} - U\indices{^\alpha}\{\nabla\indices{_\beta}A\indices{_\alpha}\})
    \\
    &+  
    2P\indices{^\mu_\lambda}\Bigl(U\indices{^\alpha}
    U\indices{^\rho}\{\nabla\indices{_\rho}
    S\indices{^\lambda_{\beta\alpha}}\}
    + U\indices{^\alpha}S\indices{^\lambda_{\rho\alpha}}
    \{\nabla\indices{_\beta}U\indices{^\rho}\}\\
    &+ 2U\indices{^\alpha}U\indices{^\xi}
    S\indices{^\lambda_{\rho\alpha}}
    S\indices{^\rho_{\beta\xi}}
    + S\indices{^\lambda_{\beta\alpha}}A\indices{^\alpha}
    \Bigr)\Bigr]
    V\indices{^\beta}. \numberthis
    \label{Jacobi_eq_2}
\end{align*}
By writing out the covariant derivatives on the l.h.s. explicitly, we may express this deviation equation
as an ordinary, linear, second order differential equation\footnote{
   It is possible to choose a tetrad frame, that is parallely transported along our curve of interest through
    the Fermi derivative,
    in which the spin-connection components vanish \cite{Aldrovandi:2002zj}. 
    In this case, the connection reduces to the Weitzenböck connection
    $\Gamma\indices{^\mu_{\alpha\beta}} = e\indices{_c^\mu}\partial\indices{_\beta}e\indices{^c_\alpha}$.
    On a mathematical level, this fact 
    ensures the validity of the strong Equivalence Principle in presence of torsion. Compared to the standard GR case, this 
    does not simplify our equation substantially, hence we choose
    to continue with the general connection.}
\begin{align*}
    \frac{\mathrm{d}^2}{\mathrm{d}\tau^2}V\indices{^\mu} = 
    &\Bigl[R\indices{^\mu_{\lambda\alpha\beta}}U\indices{^\lambda}
    U\indices{^\alpha}
    + 4S\indices{^\lambda_{\beta\alpha}}U\indices{^\alpha}(U\indices{_\lambda}
    A\indices{^\mu} + A\indices{_\lambda}U\indices{^\mu})\\
    &+ 2A\indices{^\mu}A\indices{_\beta}
    + (\nabla\indices{_\beta}A\indices{^\mu})\\
    &+ U\indices{^\mu}(U\indices{^\alpha}\{\nabla\indices{_\alpha}A\indices{_\beta}\} - U\indices{^\alpha}\{\nabla\indices{_\beta}A\indices{_\alpha}\})
    \\
    &+  
    2P\indices{^\mu_\lambda}\Bigl(U\indices{^\alpha}
    U\indices{^\rho}\{\nabla\indices{_\rho}
    S\indices{^\lambda_{\beta\alpha}}\}
    + U\indices{^\alpha}S\indices{^\lambda_{\rho\alpha}}
    \{\nabla\indices{_\beta}U\indices{^\rho}\}\\
    &+ 2U\indices{^\alpha}U\indices{^\xi}
    S\indices{^\lambda_{\rho\alpha}}
    S\indices{^\rho_{\beta\xi}}
    + S\indices{^\lambda_{\beta\alpha}}A\indices{^\alpha}
    \Bigr)\\
    &-\frac{\mathrm{d}}{\mathrm{d}\tau}(
    U\indices{^\gamma}\Gamma\indices{^\mu_{\beta\gamma}})
    + U\indices{^\alpha}U\indices{^\gamma}
    \Gamma\indices{^\mu_{\lambda\alpha}}
    \Gamma\indices{^\lambda_{\beta\gamma}}\\
    &-2U\indices{^\gamma}\Gamma\indices{^\mu_{\lambda\gamma}}
    \Bigl(U\indices{^\lambda}A\indices{_\beta} + 
    \{\nabla\indices{_\beta}U\indices{^\lambda}\}\\
    &+2U\indices{^\alpha}P\indices{^\mu_\nu}
    S\indices{^\nu_{\beta\alpha}}
    \Bigr)
    \Bigr]
    V\indices{^\beta}. \numberthis
    \label{Jacobi_eq_3}
\end{align*}

Generalising the 
definition of a Jacobi field \cite{wald:1984} to the case
including torsion, we call a solution $V\indices{^\mu}$ of the
deviation equation \eqref{Jacobi_eq_3}
a \textit{Jacobi field} on $\gamma_{s'}(\tau)$, where $s'$ has some fixed but arbitrary value.
As in the standard case, we call two points $p,q \in \gamma_{s'}(\tau)$ \textit{conjugate} if there exists a Jacobi field that
is not identically zero and vanishes at $p$ and $q$, see Fig.\,\ref{fig.5.1}.

\begin{figure}[h]
	\begin{tikzpicture}
		\draw[thick] (4,-0.5) .. controls (5,2) .. (7,3);
        \draw[thick] (4,-0.5) .. controls (6,1) .. (7,3);
        \filldraw[black] (4,-0.5) circle (2pt) node[xshift=0.2cm,yshift=-0.2cm]{$p$};
        \filldraw[black] (7,3) circle (2pt) node[xshift=0.2cm,yshift=-0.2cm]{$q$};
        \draw[thick,->,color = blue] (5.53,0.7) -- (6.4,1.4)
        node[xshift=0.3cm,yshift=-0.2cm]{$U\indices{^\mu}$};
        \draw[thick,->,color = red] (5.53,0.7) -- (4.87,1.5)
        node[xshift=0.1cm,yshift=-0.7cm]{$V\indices{^\mu}$};
	\end{tikzpicture}
	\caption{Depiction of two conjugate points $p$ and $q$ along the flow of the congruence}
	\label{fig.5.1}
\end{figure}
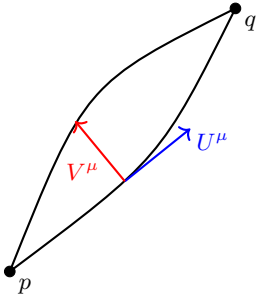

Choose the parametrisation of the curves such that $p$
lies at $\tau = 0$. Following \cite{wald:1984}, a general solution of the linear, second order ordinary
differential equation \eqref{Jacobi_eq_3} then admits the form
\begin{equation}
    V\indices{^\mu}(\tau) = E\indices{^\mu_\nu}(\tau)\frac{\mathrm{d}V\indices{^\nu}}{\mathrm{d}\tau}(0)
    \label{Ansatz:V}
\end{equation}
for some matrix $E\indices{^\mu_\nu}(\tau)$,
since $V\indices{^\mu}(0) = 0$ per assumption. Plugging this ansatz back into \eqref{Jacobi_eq_3} yields the corresponding
differential equation for the matrix $E$.

At the point $q$ we are required to have $V\indices{^\mu}(\tau_q) = 0$. Then the homogeneous linear system
\begin{equation}
    0 = E\indices{^\mu_\nu}(\tau_q)\frac{\mathrm{d}V\indices{^\nu}}{\mathrm{d}\tau}(0)
\end{equation}
has a non-trivial solution only if $\mathrm{det}(E) = 0$ at $q$.
From \eqref{cov_exch_V} it follows that
\begin{equation}
    \frac{\mathrm{D}}{\mathrm{D}\tau}V\indices{^\mu}
    =\left[U\indices{^\mu}A\indices{_\beta} + \nabla\indices{_\beta}U\indices{^\mu} + 
    2U\indices{^\alpha}P\indices{^\mu_\nu}S\indices{^\nu_{\beta\alpha}}\right]V\indices{^\beta}
\end{equation}
and with \eqref{Ansatz:V} therefore
\begin{align*}
    \frac{\mathrm{d}}{\mathrm{d}\tau}E\indices{^\mu_\nu}
    =&\Bigl[U\indices{^\mu}A\indices{_\beta} + \nabla\indices{_\beta}U\indices{^\mu} + 
    2U\indices{^\alpha}P\indices{^\mu_\nu}S\indices{^\nu_{\beta\alpha}}\\
    &-U\indices{^\alpha}\Gamma\indices{^\mu_{\beta\alpha}}\Bigr]E\indices{^\beta_\nu}.\numberthis
\end{align*}
In matrix notation this reads
\begin{equation}
    C = \left( \frac{\mathrm{d}}{\mathrm{d}\tau}E\right)E^{-1},
    \label{matrix_notat_C}
\end{equation}
where
\begin{equation}
    C\indices{^\mu_\beta}\coloneqq 
    U\indices{^\mu}A\indices{_\beta} + \nabla\indices{_\beta}U\indices{^\mu} + 
    2U\indices{^\alpha}P\indices{^\mu_\nu}S\indices{^\nu_{\beta\alpha}}
    -U\indices{^\alpha}\Gamma\indices{^\mu_{\beta\alpha}}.
    \label{Def:C}
\end{equation}
On the one hand taking the trace of \eqref{matrix_notat_C} yields
\begin{equation}
    \mathrm{tr}(C) = \mathrm{tr}\left(\left( \frac{\mathrm{d}}{\mathrm{d}\tau}E\right)E^{-1}\right) = 
    \frac{1}{\mathrm{det}(E)}\frac{\mathrm{d}}{\mathrm{d}\tau}
    \mathrm{det}(E),
\end{equation}
whereas from \eqref{Def:C} we get
\begin{equation}
    \mathrm{tr}(C) = \theta - U\indices{^\alpha}[2S\indices{_\alpha}+ \Gamma\indices{^\mu_{\mu\alpha}}].
\end{equation}
With $\theta^{(g)} = \theta - 2U\indices{^\alpha}S\indices{_\alpha}$ and 
$\Gamma\indices{^\mu_{\mu\alpha}} = \mathring{\Gamma}\indices{^\mu_{\mu\alpha}}$
we conclude
\begin{equation}
    \theta^{(g)} = \frac{\mathrm{d}}{\mathrm{d}\tau}\mathrm{ln}\abs{\mathrm{det}(E)} + 
    U\indices{^\alpha}
    \mathring{\Gamma}\indices{^\mu_{\mu\alpha}}.
    \label{Theta_conj}
\end{equation}

As the timelike curves approach the point $q$, $\mathrm{det}(E)$ tends to $0$. Thus
$\frac{\mathrm{d}}{\mathrm{d}\tau}\mathrm{ln}\abs{\mathrm{det}(E)} \to -\infty$ and hence also 
$\theta^{(g)} \to -\infty$. The converse also holds since the connection is assumed to admit only finite values.
Therefore, even with torsion, we retain the fact that the 
points $p$ and $q$ are conjugate if and only if 
$\lim_{\tau\to\tau_q}\theta^{(g)} = -\infty$, as in standard GR.
However, the question where $E$ becomes singular highly depends on the
torsional contributions present in the respective differential equation.

The standard GR form of \eqref{Theta_conj} is recovered
by setting torsion to zero and by going to a local
inertial frame where the connection vanishes.

\section{Hypersurface Orthogonality}
\label{Hypersur_orthog_sec}
Our congruence of timelike curves is not completely free
but has to obey certain conditions in order to be
physically feasible. More specifically, we require the 
vector field $U$ to be hypersurface orthogonal. This means 
that at every point where $U$ is defined, there exists locally
an embedded submanifold which is orthogonal to $U$,
see Fig\,\ref{fig.6.1}.
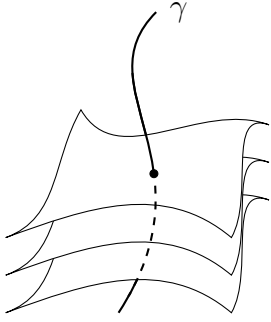
\begin{figure}[h]
	\begin{tikzpicture}
		\draw[thick] (5,-1) .. controls (6.3,1) and (4.5,2).. (5.5,3) node[xshift=0.3cm]{\large{$\gamma$}};
        \draw[fill=white,opacity=1] (3.5, -1) to[out=20, in=140] (6.5, -1) to [out=60, in=160]
        (7, 0.5) to[out=160, in=300] (4.5,0.7) to[out=240, in=20]
        cycle;
        \draw[fill=white,opacity=1] (3.5, -0.5) to[out=20, in=140] (6.5, -0.5) to [out=60, in=160]
        (7, 1) to[out=160, in=300] (4.5,1.2) to[out=240, in=20]
        cycle;
        \draw[fill=white,opacity=1] (3.5, 0) to[out=20, in=140] (6.5, 0) to [out=60, in=160]
        (7, 1.5) to[out=160, in=300] (4.5,1.7) to[out=240, in=20]
        cycle;
        \begin{scope}
            \clip(5.3,0) circle (0.8);
            \draw[thick,dashed,opacity=1] (5,-1) .. controls (6.3,1) and (4.5,2).. (5.5,3);
        \end{scope}
        \begin{scope}
            \clip(5.3,1.5) circle (0.7);
            \draw[thick,opacity = 1] (5,-1) .. controls (6.3,1) and (4.5,2).. (5.5,3);
        \end{scope}
        \filldraw[black] (5.47,0.85) circle (1.5pt);
	\end{tikzpicture}
	\caption{Illustration of a foliation of spacetime into
    hypersurfaces orthogonal to the tangent vector field of the curve $\gamma$}
	\label{fig.6.1}
\end{figure}
If this were not the case then it would be impossible to 
foliate spacetime into equal-time 3-surfaces and hence
the standard $\Lambda$CDM treatment of the Universe would
collapse.

Mathematically, hypersurface orthogonality is realised 
with Frobenius' theorem. As in \cite{wald:1984,Luz:2019kmm},
let $p\in \mathcal{M}$ and consider subspaces $W_p\subseteq T_p\mathcal{M}$ and $U_p^*\subseteq T_p^*\mathcal{M}$ with
$\mathrm{dim}(W_p) = m \geq 1$, $m < n$ and $\mathrm{dim}(U_p^*) = n-m$ such that for all $u \in U_p^*$ and all
$q \in W_p$
\begin{equation}
    u(q) = 0.
\end{equation}
This is possible since any subspace of an ambient vector space equipped with a non-degenerate symmetric bilinear form,
together with its orthogonal complement, have complementary dimension.
Here, $U_p$ is the orthogonal complement of $W_p$.
Let $O_p$ be some open neighbourhood of $p$ and define
\begin{align*}
    W &\coloneqq \bigsqcup_{q\in O_p} W_q \subseteq T\mathcal{M}\\
    U^* &\coloneqq\bigsqcup_{q\in O_p} U_q^* \subseteq T\mathcal{M}^*,
    \numberthis
\end{align*}
where $T\mathcal{M}$ is the tangent bundle and ``$\sqcup$'' denotes the
disjoint union. Frobenius' theorem states that $W$ admits
integral submanifolds if and only if for all $X,Y \in \Gamma(W)$ and for all $u\in \Gamma(U^*)$ 
\begin{equation}
    u(\mathcal{L}_X Y) = 0,
    \label{Thm:Frobenius}
\end{equation}
where $\Gamma(W)$ and $\Gamma(U^*)$ denote the set of all sections of the bundle $W$ and the cotangent bundle $U^*$ respectively. But $\mathcal{L}_X Y = [X,Y]$ and hence
\eqref{Thm:Frobenius} reads ${u_\alpha[X,Y]^\alpha = 0}$.
Writing the commutator as in \eqref{commutator1}, pulling
$u_\alpha$ into the covariant derivative and using the 
requirement that $u_\alpha Y^\alpha = 0 = u_\alpha X^\alpha$,
we arrive at
\begin{equation}
    X^\alpha Y^\beta\left(\nabla\indices{_{[\beta}}u\indices{_{\alpha]}}+ S\indices{^\gamma_{\alpha\beta}}u\indices{_\gamma}\right)
    = 0.
\end{equation}
Since $X$ and $Y$ are arbitrary, we find equivalently that
\begin{equation}
    \nabla\indices{_{[\alpha}}u\indices{_{\beta]}}-S\indices{^\gamma_{\alpha\beta}}u\indices{_\gamma}
    = \sum_{i=1}^{n-m}(u_i)\indices{_{[\alpha}}(v_i)\indices{_{\beta]}},
\end{equation}
where the $u_i$ are $n-m$ linearly independent elements 
of $\Gamma(U^*)$ (thus in particular $(u_i)_\alpha Y^\alpha = 0 = (u_i)_\alpha X^\alpha$ for all $i$) and $(v_i)$ are arbitrary
elements in $\Gamma(U^*)$.

Let us consider the special case where $\mathrm{dim}(U^*) = 1$,
which is the case relevant for us. Then we may choose $u$ itself as $(u_1)$ and we get 
\begin{equation}
    \nabla\indices{_{[\alpha}}u\indices{_{\beta]}}-S\indices{^\gamma_{\alpha\beta}}u\indices{_\gamma}
    = u\indices{_{[\alpha}}v\indices{_{\beta]}},
\end{equation}
where $v \in \Gamma(U^*)$ arbitrary. Choose $v$ such that
${P\indices{_\nu^\beta}v\indices{_\beta} = 0}$ and contract
the whole equation above with $P\indices{_\mu^\alpha}P\indices{_\nu^\beta}$.
With $\omega\indices{_{\nu\mu}} = \mathrm{D}\indices{_{[\mu}}u\indices{_{\nu]}}$ we finally get
\begin{equation}
    \omega\indices{_{\mu\nu}} = S\indices{^\alpha_{\nu\mu}}u\indices{_\alpha}
    + 2S\indices{^\alpha_{[\nu|\beta|}}u\indices{_{\mu]}}
    u\indices{_\alpha}u\indices{^\beta}.
\end{equation}
This holds for an arbitrary $u\in \Gamma(U^*)$. For our specific 
case we choose to consider the vector field $U\indices{^\alpha}$ tangent 
to the timelike curves and its associated 1-form defined via
the musical isomorphism $U\indices{_\alpha} \coloneqq g\indices{_{\alpha\beta}}U\indices{^\beta}$.
Then the condition for $U\indices{^\alpha}$ to be hypersurface
orthogonal reads
\begin{equation}
    \omega\indices{_{\mu\nu}} = S\indices{^\alpha_{\nu\mu}}U\indices{_\alpha}
    + 2S\indices{^\alpha_{[\nu|\beta|}}U\indices{_{\mu]}}
    U\indices{_\alpha}U\indices{^\beta},
\end{equation}
with $\omega\indices{_{\nu\mu}} =
\mathrm{D}\indices{_{[\mu}}U\indices{_{\nu]}}$ the vorticity
encountered previously,
or 
\begin{equation}
    {\omega^{(g)}}\indices{_{\mu\nu}} = \left(S\indices{_{\alpha\nu\mu}} + 2S\indices{_{[\mu\nu]\alpha}}\right)U\indices{^\alpha}
    + 4U\indices{_{[\mu}}S\indices{^\alpha_{\nu]\beta}}
    U\indices{_\alpha}U\indices{^\beta}
\end{equation}
in geometric variables, in agreement with \cite{Luz:2019kmm}.
For a totally antisymmetric torsion, these conditions reduce
to
\begin{equation}
    \omega\indices{_{\mu\nu}} = S\indices{^\alpha_{\nu\mu}}U\indices{_\alpha}
\end{equation}
and
\begin{equation}
    {\omega^{(g)}}\indices{_{\mu\nu}} = S\indices{_{\mu\nu}^\alpha}U\indices{_\alpha}.
\end{equation}

Therefore, in the presence of torsion, we can no longer assume that vorticity vanishes, as is typically done in the standard GR treatment. It must be emphasized that torsion, and torsion alone, is responsible for the generation of vorticity.

\section{Case studies}
\label{Case_stud_sec}
The Raychaudhuri equation allows us to predict the appearances
of conjugate points (also referred to as \textit{focal points}) provided that we assume an energy condition.
For the case of timelike curves, as studied here, the relevant
condition is the strong energy condition (SEC).

In this section we are going to investigate different specific
torsion forms and examine how a modified SEC with torsion
leads to the appearance of focal points.
Torsion may be split, with respect to the action of the Lorentz group, into three irreducible tensors \cite{Capozziello:2001mq}: A totally antisymmetric part,
a vectorial contribution and the rest which is traceless
but not totally antisymmetric.
We are going to investigate the former two parts in detail and 
consider the special case of a Weyssenhoff fluid
relevant most prominently in Einstein-Cartan-Sciama-Kibble (ECSK) theory. For a recap of the most well-known energy
conditions, see \Cref{sect_en_cond}.

\subsection{Totally Antisymmetric Torsion}
In the case of a totally antisymmetric torsion tensor
$S\indices{_{\mu\nu\lambda}} = S\indices{_{[\mu\nu\lambda]}}$
the geometric expansion scalar coincides with the conventional one $\theta^{(g)} = \theta$.
For autoparallel curves, and for $n=4$,
the Raychaudhuri equation \eqref{Raychaudhuri_eq} reads
\begin{align*}
    \frac{\mathrm{d}}{\mathrm{d}\tau}\theta = &-\frac{1}{3}\theta^2
    -R\indices{_{\mu\nu}}U\indices{^\mu}U\indices{^\nu}
    -\sigma\indices{_{\mu\nu}}\sigma\indices{^{\mu\nu}}
    +\omega\indices{_{\mu\nu}}\omega\indices{^{\mu\nu}}\\
    & + 2 S\indices{_{\mu\nu\rho}}
    \omega\indices{^{\mu\nu}}U\indices{^\rho}. \numberthis
\end{align*}
Requiring hypersurface orthogonality, this can be reframed
into
\begin{align*}
    \frac{\mathrm{d}}{\mathrm{d}\tau}\theta = &-\frac{1}{3}\theta^2
    -R\indices{_{\mu\nu}}U\indices{^\mu}U\indices{^\nu}
    -\sigma\indices{_{\mu\nu}}\sigma\indices{^{\mu\nu}}\\
    &-S\indices{^\alpha_{\mu\nu}}S\indices{^\beta^{\mu\nu}}U\indices{_\alpha}U\indices{_\beta}. \numberthis
\end{align*}
One can show that in the case of a totally antisymmetric
torsion tensor,
\begin{equation}
    R\indices{_{\mu\nu}}U\indices{^\mu}U\indices{^\nu} = 
    \mathring{R}\indices{_{\mu\nu}}
    U\indices{^\mu}U\indices{^\nu} - 
    S\indices{^\alpha_{\mu\nu}}S\indices{^\beta^{\mu\nu}}
    U\indices{_\alpha}U\indices{_\beta}.
\end{equation}
Therefore we conclude that in this case, the Raychaudhuri equation does
not contain any torsional contributions and reduces to its standard form
\begin{equation}
    \frac{\mathrm{d}}{\mathrm{d}\tau}\theta = -\frac{1}{3}\theta^2
    -\mathring{R}\indices{_{\mu\nu}}U\indices{^\mu}U\indices{^\nu}
    -\sigma\indices{_{\mu\nu}}\sigma\indices{^{\mu\nu}}. \numberthis
\end{equation}
Additionally one may verify that $\theta = \mathring{\theta}$
and $\sigma\indices{_{\mu\nu}} = \mathring{\sigma}\indices{_{\mu\nu}}$.

Since $\sigma\indices{_{\mu\nu}}$ and 
$\omega\indices{_{\mu\nu}}$ have both their indices projected 
orthogonal to $U\indices{^\alpha}$, they are both spacelike
in the sense that
$\sigma\indices{_{\mu\nu}}\sigma\indices{^{\mu\nu}}\geq 0$
and $\omega\indices{_{\mu\nu}}\omega\indices{^{\mu\nu}}\geq 0$.
Hence we can generalise Wald's proposition \cite[p.\,226]{wald:1984} to:

\begin{proposition}
Let $(\mathcal{M},g)$ be a spacetime, equipped with an independent
metric compatible affine connection $\Gamma$. Let the Ricci tensor satisfy $\mathring{R}\indices{_{\mu\nu}}\xi\indices{^\mu}\xi\indices{^\nu} \geq 0$
for all timelike $\xi\indices{^\mu}$ and let the torsion tensor $S\indices{_{\mu\nu\lambda}}$ be totally antisymmetric. Consider a hypersurface orthogonal congruence
of timelike autoparallels emanating from $p \in \mathcal{M}$.
If $\theta_0\coloneqq \theta(\tau_0) < 0$ at some 
point $r\in\mathcal{M}$ at parameter value $\tau_0$ of a member of the congruence in the future 
of $p$, then there exists a point $q$ conjugate to $p$ within
proper time $\tau\leq -3/\theta_0$ from $r$ (assuming the congruence extends that far).
\end{proposition}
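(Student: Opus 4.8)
The plan is to follow the classical focusing argument of Wald (\cite{wald:1984}), taking as input the reduction of the Raychaudhuri equation already obtained above for totally antisymmetric torsion. First I would note that, since the congruence consists of autoparallels ($A\indices{^\mu}=0$, which here also makes them geodesics) and is hypersurface orthogonal, the expansion obeys
\begin{equation*}
    \frac{\mathrm{d}}{\mathrm{d}\tau}\theta = -\frac{1}{3}\theta^2 - \mathring{R}\indices{_{\mu\nu}}U\indices{^\mu}U\indices{^\nu} - \sigma\indices{_{\mu\nu}}\sigma\indices{^{\mu\nu}}.
\end{equation*}
The shear term is non-negative because $\sigma\indices{_{\mu\nu}}$ is purely spatial, and the hypothesis $\mathring{R}\indices{_{\mu\nu}}\xi\indices{^\mu}\xi\indices{^\nu}\geq 0$, applied to the timelike $U\indices{^\mu}$, renders the curvature contribution non-positive, so I obtain the differential inequality $\mathrm{d}\theta/\mathrm{d}\tau \leq -\frac{1}{3}\theta^2$.

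Next I would run the standard comparison argument. Writing the inequality as $\frac{\mathrm{d}}{\mathrm{d}\tau}(\theta^{-1}) \geq \frac{1}{3}$ and integrating from the value $\theta_0<0$ attained at $r$ gives
\begin{equation*}
    \frac{1}{\theta(\tau)} \geq \frac{1}{\theta_0} + \frac{1}{3}(\tau-\tau_0).
\end{equation*}
Since $\theta_0<0$ the right-hand side increases from a negative value and reaches $0$ at $\tau-\tau_0 = -3/\theta_0$; as $\theta$ cannot increase it remains negative, so $\theta^{-1}$ is trapped between this bound and $0$ and is forced to $0^-$. Hence $\theta\to-\infty$ at some $\tau_q$ with $\tau_q-\tau_0\leq -3/\theta_0$.

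It then remains to identify this blow-up with a conjugate point. For totally antisymmetric torsion $S\indices{_\alpha}=0$, so $\theta^{(g)}=\theta$, and by the analysis of \Cref{Conj_points_sec} the points $p$ and $q$ are conjugate precisely when $\theta^{(g)}\to-\infty$ (the converse holding because the connection is assumed to take only finite values). The divergence of $\theta$ at $\tau_q$ therefore produces the conjugate point $q$ within proper time $-3/\theta_0$ of $r$, assuming the congruence extends that far.

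The individual steps are routine; the substantive content is front-loaded into the earlier results, so the point I would be most careful about is \emph{applicability}. Specifically, I would verify that hypersurface orthogonality together with total antisymmetry is exactly what makes the vorticity and explicit torsion contributions cancel (as in the case study), leaving the clean Levi-Civita form, and that the identification $\theta^{(g)}=\theta$ is what legitimately bridges the Raychaudhuri equation, which governs $\theta$, to the conjugate-point criterion, which is phrased for $\theta^{(g)}$. The one genuine conceptual point, rather than an obstacle, is that the blow-up of the expansion \emph{is} the focal point, so beyond the converse direction supplied in \Cref{Conj_points_sec} no separate existence argument for $q$ is required.
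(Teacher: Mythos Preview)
Your proposal is correct and follows essentially the same route as the paper: reduce the Raychaudhuri equation to its Levi-Civita form using total antisymmetry, autoparallel curves, and hypersurface orthogonality, derive the differential inequality $\mathrm{d}\theta/\mathrm{d}\tau\leq -\tfrac{1}{3}\theta^2$, integrate to force $\theta\to-\infty$ within proper time $-3/\theta_0$, and invoke the conjugate-point criterion from \Cref{Conj_points_sec}. Your explicit remark that $S\indices{_\alpha}=0$ gives $\theta^{(g)}=\theta$, bridging the Raychaudhuri inequality to the $\theta^{(g)}$-formulated conjugate-point criterion, is a helpful clarification that the paper leaves implicit.
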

\begin{proof}
The proof follows quickly based on the discussion before:
Note that 
\begin{equation}
    \frac{\mathrm{d}}{\mathrm{d}\tau}\theta \leq -\frac{1}{3}\theta^2
\end{equation}
supposed that all assumptions in the proposition hold.
Upon choosing $\tau_0 = 0$, we can solve this differential inequality to find
\begin{equation}
    \theta(\tau)^{-1} \geq \theta_0^{-1} + \frac{1}{3}\tau.
\end{equation}
Now if $\theta_0 < 0$ at some point $r$, then $\theta^{-1}$
must cross zero within the time interval $\tau\in [0,-3/\theta_0]$. Since $\theta$ is negative before the crossing, this means that $\theta \to -\infty$ within this 
time interval. But $\theta \to -\infty$ is equivalent 
to the existence of a conjugate point.
\end{proof}

Note that a ``spacetime'' is defined in the sense of Wald,
namely a smooth manifold equipped with a Lorentzian metric.
Further, due to the fact that torsion completely cancels out here,
the SEC $\mathring{R}\indices{_{\mu\nu}}\xi\indices{^\mu}\xi\indices{^\nu} \geq 0$ is not modified at all.

It is likewise not difficult to show that 
${\sigma^{(g)}}\indices{_{\mu\nu}} = \sigma\indices{_{\mu\nu}}$ and that
the shear equation
\eqref{shear_eq} also reduces to its standard form without torsional contributions:

\begin{align*}
    \frac{\mathring{\mathrm{D}}}{\mathrm{D}\tau}\sigma\indices{_{\mu\nu}} = 
    & \mathring{R}\indices{_{\langle\mu|\alpha\beta|\nu\rangle}}U\indices{^\alpha}
    U\indices{^\beta} - \frac{2}{3}\theta \sigma\indices{_{\mu\nu}}
    - \sigma\indices{_{\alpha\langle\mu}}
    \sigma\indices{_{\nu\rangle}^\alpha}.
    \numberthis 
\end{align*}
Hence there are also no residual
direction-dependent torsional effects to be expected.

Lastly, the evolution equation for the vorticity
\eqref{vorticity_eq} becomes
\begin{equation}
    U\indices{^\alpha}
    U\indices{^\beta}
    \left(\mathring{R}\indices{_{[\mu|\alpha\beta|\nu]}}
    + 2S\indices{_\beta^\gamma_{[\mu}}
    S\indices{_{|\gamma\alpha|\nu]}}\right) = 0.
\end{equation}
But $\mathring{R}\indices{_{[\mu|\alpha\beta|\nu]}}
U\indices{^\alpha}U\indices{^\beta} = 0$, due to the symmetries
of the Riemann-tensor, and $U\indices{^\alpha}U\indices{^\beta}
S\indices{_\beta^\gamma_{[\mu}}
S\indices{_{|\gamma\alpha|\nu]}} = 0$ hold separately. Therefore
the evolution equation for the vorticity is merely an identity
and does not provide us with additional constraints.

\subsection{Vectorial Torsion}
The second irreducible component of torsion is given by
(first introduced in \cite{Friedmann1924}, with applications discussed, e.g., in \cite{Csillag:2024oqo})
\begin{equation}
    S\indices{_{\mu\nu\lambda}} = -\frac{2}{3}g\indices{_{\mu[\nu}}S\indices{_{\lambda]}}.
    \label{vector_tors}
\end{equation}

Hypersurface orthogonality demands here that
${\omega^{(g)}}\indices{_{\mu\nu}} = 0$ and we find the Raychaudhuri equation for autoparallels
to admit the form
\begin{align*}
    \frac{\mathrm{d}}{\mathrm{d}\tau}\theta^{(g)} = &-\frac{1}{3}{\theta^{(g)}}^2
    -R\indices{_{\mu\nu}}U\indices{^\mu}U\indices{^\nu}
    -{\sigma^{(g)}}\indices{_{\mu\nu}}{\sigma^{(g)}}\indices{^{\mu\nu}}\\
    &- \frac{2}{3}\theta^{(g)}S\indices{_\mu}U\indices{^\mu}
    -2\frac{\mathrm{D}}{\mathrm{D}\tau}(S\indices{_\mu}U\indices{^\mu}). \numberthis
\end{align*}
In order to render $\mathrm{d}\theta^{(g)}/\mathrm{d}\tau$
negative and hence establish the formation of conjugate points, 
the modified SEC
\begin{equation}
    R\indices{_{\mu\nu}}\xi\indices{^\mu}\xi\indices{^\nu} \geq - \frac{2}{3}\theta^{(g)}S\indices{_\mu}\xi\indices{^\mu}
    -2\frac{\mathrm{D}}{\mathrm{D}\tau}(S\indices{_\mu}\xi\indices{^\mu})
    \label{modSEC_vect_tors}
\end{equation}
for any timelike $\xi\indices{^\mu}$ has to be assumed.
Then indeed, since ${\sigma^{(g)}}\indices{_{\mu\nu}}{\sigma^{(g)}}\indices{^{\mu\nu}} \geq 0$ due to ${\sigma^{(g)}}\indices{_{\mu\nu}}$ being spacelike,
we have that $\mathrm{d}\theta^{(g)}/\mathrm{d}\tau \leq -{\theta^{(g)}}^2/3$. Therefore we have:

\begin{proposition}
Let $(\mathcal{M},g)$ be a spacetime, equipped with an independent
metric compatible affine connection $\Gamma$. Let the Ricci tensor satisfy \eqref{modSEC_vect_tors}
for all timelike $\xi\indices{^\mu}$ and let the torsion tensor $S\indices{_{\mu\nu\lambda}}$ be of form \eqref{vector_tors}. Consider a hypersurface orthogonal congruence
of timelike autoparallels emanating from $p \in \mathcal{M}$.
If $\theta_0\coloneqq \theta(\tau_0) < 0$ at some 
point $r\in\mathcal{M}$ at parameter value $\tau_0$ of a member of the congruence in the future 
of $p$, then there exists a point $q$ conjugate to $p$ within
proper time $\tau\leq -3/\theta_0$ from $r$ (assuming the congruence extends that far).
\end{proposition}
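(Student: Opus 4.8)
The plan is to recognise that this proposition is structurally identical to the totally antisymmetric case; the only difference is that torsion no longer cancels from the Raychaudhuri equation on its own, but is instead controlled by the modified strong energy condition \eqref{modSEC_vect_tors}. First I would evaluate \eqref{modSEC_vect_tors} on the tangent field itself by setting $\xi\indices{^\mu} = U\indices{^\mu}$, giving $R\indices{_{\mu\nu}}U\indices{^\mu}U\indices{^\nu} \geq -\tfrac{2}{3}\theta^{(g)}S\indices{_\mu}U\indices{^\mu} - 2\tfrac{\mathrm{D}}{\mathrm{D}\tau}(S\indices{_\mu}U\indices{^\mu})$. Substituting this bound into the vectorial-torsion Raychaudhuri equation stated just above the proposition, the two explicit torsional source terms cancel exactly, leaving $\mathrm{d}\theta^{(g)}/\mathrm{d}\tau \leq -\tfrac{1}{3}(\theta^{(g)})^2 - {\sigma^{(g)}}\indices{_{\mu\nu}}{\sigma^{(g)}}\indices{^{\mu\nu}}$. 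Since ${\sigma^{(g)}}\indices{_{\mu\nu}}$ is purely spatial, its square is non-negative and may be dropped, yielding the clean inequality $\mathrm{d}\theta^{(g)}/\mathrm{d}\tau \leq -\tfrac{1}{3}(\theta^{(g)})^2$ already recorded in the discussion preceding the statement.

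From here the argument is the Riccati comparison used in the first proposition. Choosing $\tau_0 = 0$, I observe that this inequality makes $\theta^{(g)}$ non-increasing, so if it starts negative it remains strictly negative; the reciprocal $1/\theta^{(g)}$ is therefore well defined and differentiable until $\theta^{(g)}$ diverges. Dividing by $-(\theta^{(g)})^2 < 0$ flips the inequality to $\tfrac{\mathrm{d}}{\mathrm{d}\tau}(\theta^{(g)})^{-1} \geq \tfrac{1}{3}$, and integrating from $0$ gives $(\theta^{(g)}(\tau))^{-1} \geq (\theta^{(g)}_0)^{-1} + \tfrac{1}{3}\tau$. Because $\theta^{(g)}_0 < 0$, the right-hand side starts negative and reaches zero no later than $\tau = -3/\theta^{(g)}_0$; with $(\theta^{(g)})^{-1}$ approaching zero from below while $\theta^{(g)}$ stays negative, this forces $\theta^{(g)} \to -\infty$ somewhere in $[0,\,-3/\theta^{(g)}_0]$. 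I would then invoke the conjugate-point criterion of \Cref{Conj_points_sec}, namely that $p$ and $q$ are conjugate if and only if $\lim_{\tau\to\tau_q}\theta^{(g)} = -\infty$, to conclude that a point $q$ conjugate to $p$ exists within proper time $\tau \leq -3/\theta^{(g)}_0$ from $r$, provided the congruence extends that far.

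The only genuine subtlety, and the step I would handle most carefully, is the bookkeeping of which expansion scalar appears where. Both the differential inequality and the conjugate-point criterion are phrased in terms of the geometric expansion $\theta^{(g)}$, whereas the hypothesis is written for $\theta_0 = \theta(\tau_0)$; since $\theta^{(g)} = \theta - 2U\indices{^\alpha}S\indices{_\alpha}$ for this torsion form, I would either restate the hypothesis directly as $\theta^{(g)}_0 < 0$ or carry the shift $-2U\indices{^\alpha}S\indices{_\alpha}$ consistently, so that the bound $-3/\theta_0$ refers to the same quantity that drives the Raychaudhuri inequality. Beyond this, there is no new analytic difficulty: once the modified SEC has absorbed the torsion contributions, the proof is a routine adaptation of Wald's argument.
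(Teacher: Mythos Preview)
Your proposal is correct and follows exactly the approach the paper intends: it simply states that ``the proof proceeds exactly as in the previous subsection,'' meaning the Riccati-type integration of $\mathrm{d}\theta^{(g)}/\mathrm{d}\tau \leq -\tfrac{1}{3}(\theta^{(g)})^2$ once the modified SEC has absorbed the torsion terms. Your remark about the $\theta$ versus $\theta^{(g)}$ bookkeeping is in fact a genuine subtlety that the paper glosses over, since for vectorial torsion these two scalars differ by $2U\indices{^\alpha}S\indices{_\alpha}$; noting it makes your argument slightly more careful than the original.
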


The proof proceeds exactly as in the previous subsection.

In \cite{Kranas:2018jdc}, a particular realisation of this vectorial torsion is used.
More specifically
\begin{equation}
    S\indices{_{\mu\nu\lambda}} = 2\phi g\indices{_{\mu[\nu}}U\indices{_{\lambda]}},
    \label{Kranas_tors}
\end{equation}
where $\phi = \phi(t)$ is a scalar function that depends only
on cosmic time and $g\indices{_{\mu\nu}}$ is the Friedmann-Lemaître-Robertson-Walker (FLRW) metric.
Therefore ${S\indices{_\mu} = -3\phi U\indices{_\mu}}$. Since \cite{Kranas:2018jdc} studies
behaviours of the background cosmology, the Cosmological Principle is required to hold.
The most general allowed values of the
torsion tensor in a spacetime compatible with the Cosmological
Principle are provided in \cite{TSAMPARLIS197927}. Indeed, \eqref{Kranas_tors} is a subset of these
allowed values if $U\indices{_{i}} = 0$ for $i=1,2,3$.
This ensures that $S\indices{_{00i}} = 0$ for $i=1,2,3$, as
desired. Since the vector field $U\indices{^\mu}$ is assumed
to be normalised, we conclude that
$U\indices{^\mu} = (1,0,0,0)^\intercal$. Then indeed
$U\indices{^\mu}U\indices{_\mu}=-1$ and one may verify that
$U\indices{^\alpha}\nabla\indices{_\alpha}U\indices{^\mu} = 0$,
which means that $U\indices{^\mu}$ corresponds to an autoparallel.

The Raychaudhuri equation reduces to
\begin{align*}
    \frac{\mathrm{d}}{\mathrm{d}t}\theta^{(g)} = &-\frac{1}{3}{\theta^{(g)}}^2
    -R\indices{_{00}}
    -{\sigma^{(g)}}\indices{_{\mu\nu}}{\sigma^{(g)}}\indices{^{\mu\nu}}\\
    &- 2\theta^{(g)} \phi(t)
    -6\frac{\mathrm{d}}{\mathrm{d}t}\phi(t), \numberthis
\end{align*}
where the derivative is now with respect to cosmic time
since $U\indices{^\mu} = (1,0,0,0)^\intercal$ and thus
$\mathrm{d}/\mathrm{d}\tau=\mathrm{d}/\mathrm{d}t$.
Convergence of the congruence is hence ensured if 
\begin{equation}
    R\indices{_{00}} \geq - 2\theta^{(g)} \phi(t)
    -6\frac{\mathrm{d}}{\mathrm{d}t}\phi(t).
\end{equation}
Decomposing 
\begin{equation}
    R\indices{_{00}} = \mathring{R}\indices{_{00}} -
    6\frac{\mathrm{d}}{\mathrm{d}t}\phi
\end{equation}
with $\mathring{R}\indices{_{00}} = -3\ddot{a}/a$ 
and using $\theta^{(g)} = \theta - 2U\indices{^\alpha}S\indices{_\alpha} = \theta - 6\phi$
with $\theta = 3\dot{a}/a + 6\phi$,
we find $\theta^{(g)} = 3\dot{a}/a$ and
restriction on the Universe's expansion
\begin{equation}
    \ddot{a} \leq  2\dot{a}\phi(t).
\end{equation}
Implications of this restriction would be interesting
to explore in future work.

\subsection{Weyssenhoff fluid}
The Weyssenhoff fluid is a perfect fluid with spin. 
Macroscopically it is a continuous medium, however, 
microscopically the Weyssenhoff fluid is characterised by the spin of matter fields \cite{Brechet:2007cj}.
It is mainly studied within the context of ECSK
theory, where it serves as the source of torsion \cite{Bohmer:2017dqs,Hehl1,YNObukhov_1987,Brechet:2008zz,Szydlowski:2003nv}. In ECSK, torsion is related algebraically to the
spin of matter via
\begin{equation}
    S\indices{^\lambda_{\mu\nu}} + 2\delta^\lambda_{[\mu}S\indices{_{\nu]}} = 
    \kappa \Sigma\indices{^\lambda_{\mu\nu}},
    \label{EC_tors}
\end{equation}
where $\kappa\coloneqq 8\pi G/c^4$ and $\Sigma\indices{^\lambda_{\mu\nu}}$ is the spin-density tensor.

It was postulated in \cite{YNObukhov_1987} that for a 
Weyssenhoff fluid, the source $\Sigma\indices{^\lambda_{\mu\nu}}$ admits the form
\begin{equation}
    \Sigma\indices{^\lambda_{\mu\nu}} = U\indices{^\lambda}
    \Sigma\indices{_{\mu\nu}},
    \label{Weyss_ans}
\end{equation}
where $U\indices{^\lambda}$ denotes the 4-velocity of the
fluid and $\Sigma\indices{_{\mu\nu}}$ is an antisymmetric
tensor describing the spin-density of matter. 
Furthermore, it was shown in \cite{YNObukhov_1987} that
the Frenkel condition 
\begin{equation}
    U\indices{^\mu}
    \Sigma\indices{_{\mu\nu}} = 0
\end{equation}
necessarily arises, meaning that the spin of the matter fields
is spacelike, hence $\Sigma\indices{_{\mu\nu}}\Sigma\indices{^{\mu\nu}} \geq 0$.
Inserting the ansatz \eqref{Weyss_ans} into \eqref{EC_tors},
and contracting over $\lambda$ and $\nu$, we find that
the torsion vector vanishes $S\indices{_\mu} = 0$.
Therefore the torsion tensor finally reads
\begin{equation}
    S\indices{^\lambda_{\mu\nu}} = \kappa 
    U\indices{^\lambda}
    \Sigma\indices{_{\mu\nu}}
    \label{Weyss_tors}
\end{equation}
for a Weyssenhoff fluid in ECSK.
For simplicity we set $\kappa = 1$ in the following (or
likewise we could absorb $\kappa$ into $\Sigma\indices{_{\mu\nu}}$).

For a Weyssenhoff fluid we find that ${\theta^{(g)} = 
\theta}$, ${{\omega^{(g)}}\indices{_{\mu\nu}} = \omega\indices{_{\mu\nu}}}$, and ${\sigma^{(g)}}\indices{_{\mu\nu}} = \sigma\indices{_{\mu\nu}}$.
Then the Raychaudhuri equation for autoparallels becomes
\begin{equation}
    \frac{\mathrm{d}}{\mathrm{d}\tau}\theta = -\frac{1}{3}\theta^2
    -R\indices{_{\mu\nu}}U\indices{^\mu}U\indices{^\nu}
    -\sigma\indices{_{\mu\nu}}\sigma\indices{^{\mu\nu}}
    +\omega\indices{_{\mu\nu}}\omega\indices{^{\mu\nu}},
    \label{Weyss_Raychaud}
\end{equation}
where the vorticity, assuming hypersurface orthogonality,
can be shown to exactly match the spin-density
${\omega\indices{_{\mu\nu}} = \Sigma\indices{_{\mu\nu}}}$.
In order to establish a similar proposition as in the previous sections we would now have to require the modified SEC, i.e., that
\begin{equation}
    R\indices{_{\mu\nu}}\xi\indices{^\mu}\xi\indices{^\nu} \geq 
    \Sigma\indices{_{\mu\nu}}\Sigma\indices{^{\mu\nu}}
    \label{mod_SEC_Weyss}
\end{equation}
holds for all timelike vectors $\xi\indices{^\mu}$.

\begin{proposition}
Let $(\mathcal{M},g)$ be a spacetime, equipped with an independent
metric compatible affine connection $\Gamma$.
Let matter be described by a Weyssenhoff fluid in EC theory,
i.e., torsion admits the form $S\indices{^\lambda_{\mu\nu}} = \kappa U\indices{^\lambda}\Sigma\indices{_{\mu\nu}}$, where 
$\kappa\coloneqq 8\pi G/c^4$, $U\indices{^\lambda}$ is the 4-velocity of the fluid and
$\Sigma\indices{_{\mu\nu}}$ an antisymmetric tensor describing
the spin-density of matter.
Let the Ricci tensor satisfy \eqref{mod_SEC_Weyss}
for all timelike $\xi\indices{^\mu}$. Consider a hypersurface orthogonal congruence
of timelike autoparallels emanating from $p \in \mathcal{M}$.
If $\theta_0\coloneqq \theta(\tau_0) < 0$ at some 
point $r\in\mathcal{M}$ at parameter value $\tau_0$ of a member of the congruence in the future 
of $p$, then there exists a point $q$ conjugate to $p$ within
proper time $\tau\leq -3/\theta_0$ from $r$ (assuming the congruence extends that far).
\end{proposition}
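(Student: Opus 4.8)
The plan is to mirror the two preceding propositions almost verbatim: reduce the Raychaudhuri equation to the clean differential inequality $\mathrm{d}\theta/\mathrm{d}\tau \leq -\tfrac{1}{3}\theta^2$, integrate it, and then translate the resulting blow-up $\theta\to-\infty$ into the existence of a conjugate point. The genuine content has already been assembled in the run-up to the statement, so the proof is essentially a matter of chaining together the right ingredients with correct signs.

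First I would start from the Weyssenhoff form of the Raychaudhuri equation \eqref{Weyss_Raychaud}, which already incorporates the identifications $\theta^{(g)}=\theta$, $\sigma^{(g)}\indices{_{\mu\nu}}=\sigma\indices{_{\mu\nu}}$ and $\omega^{(g)}\indices{_{\mu\nu}}=\omega\indices{_{\mu\nu}}$ valid for this source. The essential input is the hypersurface-orthogonality identity $\omega\indices{_{\mu\nu}}=\Sigma\indices{_{\mu\nu}}$, which replaces the otherwise uncontrolled vorticity term $+\omega\indices{_{\mu\nu}}\omega\indices{^{\mu\nu}}$ by $+\Sigma\indices{_{\mu\nu}}\Sigma\indices{^{\mu\nu}}$. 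This is the only place where the specific structure of the Weyssenhoff source, together with the Frenkel condition guaranteeing $\Sigma\indices{_{\mu\nu}}\Sigma\indices{^{\mu\nu}}\geq 0$, enters decisively, and it is what allows the argument to close.

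Next I would invoke the modified SEC \eqref{mod_SEC_Weyss} with $\xi\indices{^\mu}=U\indices{^\mu}$ to dominate this positive spin term by the Ricci term: since $R\indices{_{\mu\nu}}U\indices{^\mu}U\indices{^\nu}\geq \Sigma\indices{_{\mu\nu}}\Sigma\indices{^{\mu\nu}}$, the combination $-R\indices{_{\mu\nu}}U\indices{^\mu}U\indices{^\nu}+\Sigma\indices{_{\mu\nu}}\Sigma\indices{^{\mu\nu}}\leq 0$. Together with $\sigma\indices{_{\mu\nu}}\sigma\indices{^{\mu\nu}}\geq 0$, every term on the right-hand side of \eqref{Weyss_Raychaud} except $-\tfrac{1}{3}\theta^2$ is rendered non-positive, yielding $\mathrm{d}\theta/\mathrm{d}\tau \leq -\tfrac{1}{3}\theta^2$. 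From here the argument is identical to the first proposition: setting $\tau_0=0$ and integrating the inequality gives $\theta(\tau)^{-1}\geq\theta_0^{-1}+\tfrac{1}{3}\tau$, so if $\theta_0<0$ the right-hand side must cross zero at some $\tau\leq -3/\theta_0$, forcing $\theta\to-\infty$ there. Finally, because $\theta=\theta^{(g)}$ for the Weyssenhoff fluid, this is precisely the condition $\theta^{(g)}\to-\infty$, which by the analysis of \Cref{Conj_points_sec} is equivalent to the existence of a point $q$ conjugate to $p$.

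I do not anticipate a genuine obstacle, since the hard work was done earlier in deriving \eqref{Weyss_Raychaud}, establishing $\omega\indices{_{\mu\nu}}=\Sigma\indices{_{\mu\nu}}$, and proving that $\theta^{(g)}\to-\infty$ characterises conjugacy. The one step that deserves care is the sign bookkeeping: a priori the $+\omega\indices{_{\mu\nu}}\omega\indices{^{\mu\nu}}$ term fights against focusing, so one must verify that imposing the modified SEC really does render the net spin-plus-Ricci contribution non-positive. This is exactly why the SEC had to be modified to carry $\Sigma\indices{_{\mu\nu}}\Sigma\indices{^{\mu\nu}}$ on its right-hand side, and confirming that this modification precisely neutralises the vorticity-generated term is the only point requiring genuine attention.
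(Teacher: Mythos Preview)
Your proposal is correct and follows exactly the template the paper uses: the paper does not write out a separate proof for this proposition but relies on the same argument as in Proposition~1, namely reducing \eqref{Weyss_Raychaud} (via hypersurface orthogonality $\omega\indices{_{\mu\nu}}=\Sigma\indices{_{\mu\nu}}$, the modified SEC \eqref{mod_SEC_Weyss}, and $\sigma\indices{_{\mu\nu}}\sigma\indices{^{\mu\nu}}\geq 0$) to the inequality $\mathrm{d}\theta/\mathrm{d}\tau\leq -\tfrac{1}{3}\theta^2$, integrating, and invoking the conjugate-point characterisation from \Cref{Conj_points_sec}. Your explicit attention to the sign of the vorticity term and the role of $\theta=\theta^{(g)}$ in applying \eqref{Theta_conj} is well placed and matches precisely what the paper leaves implicit.
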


The shear equation is found to be 
\begin{align*}
    \frac{\mathrm{D}}{\mathrm{D}\tau}\sigma\indices{_{\mu\nu}} = 
    & R\indices{_{\langle\mu|\alpha\beta|\nu\rangle}}U\indices{^\alpha}
    U\indices{^\beta} - \frac{2}{3}\theta \sigma\indices{_{\mu\nu}}
    - \sigma\indices{_{\alpha\langle\mu}}
    \sigma\indices{_{\nu\rangle}^\alpha}\\
    &- \Sigma\indices{_{\alpha\langle\mu}}
    \Sigma\indices{_{\nu\rangle}^\alpha}\numberthis
\end{align*}
or likewise 
\begin{align*}
    \frac{\mathring{\mathrm{D}}}{\mathrm{D}\tau}\sigma\indices{_{\mu\nu}} = 
    & \mathring{R}\indices{_{\langle\mu|\alpha\beta|\nu\rangle}}U\indices{^\alpha}
    U\indices{^\beta} - \frac{2}{3}\theta \sigma\indices{_{\mu\nu}}
    - \sigma\indices{_{\alpha\langle\mu}}
    \sigma\indices{_{\nu\rangle}^\alpha}\\
    &- 2U\indices{^\alpha}U\indices{^\beta}U\indices{_{(\mu|}}
    \partial\indices{_{\beta}}\Sigma\indices{_{|\nu)\alpha}}
    + 2U\indices{^\alpha}U\indices{^\beta}\mathring{\Gamma}
    \indices{^\lambda_{\alpha\beta}}U\indices{_{(\mu}}
    \Sigma\indices{_{\nu)\lambda}}\\
    &- 2U\indices{^\alpha}\partial\indices{_{\langle\mu}}
    \Sigma\indices{_{\nu\rangle\alpha}}
    +2U\indices{^\alpha}\mathring{\Gamma}\indices{^\lambda
    _{\alpha\langle\nu}}\Sigma\indices{_{\mu\rangle\lambda}}
    +2\Sigma\indices{^\lambda_{(\mu}}\sigma\indices{_{\nu)\lambda}}\numberthis
\end{align*}
after splitting the Levi-Civita contributions and torsional parts,
with $\sigma\indices{_{\mu\nu}} = \mathring{\sigma}\indices{_{\mu\nu}}$.
Torsion is therefore anticipated to induce direction-dependent effects on the congruence of curves.
The vorticity equation
on the other hand reduces to an algebraic equation for the
spin-density of matter
\begin{equation}
2\Sigma\indices{_{[\nu|\alpha}}\mathring{\nabla}\indices{_{|\mu]}}U\indices{^\alpha}=
    -\Sigma\indices{^\lambda_{[\mu}}
    \Sigma\indices{_{\nu]\lambda}} - \frac{2}{3}\theta
    \Sigma\indices{_{\mu\nu}}
    + 2\sigma\indices{_{\alpha[\mu}}\Sigma\indices{_{\nu]}^\alpha},
\end{equation}
or equivalently
\begin{equation}
3 A\indices{_{\rho}}\Sigma\indices{_{[\mu}^\rho}U\indices{_{\nu]}} =
    \Sigma\indices{_\mu^\rho}\nabla\indices{_{[\nu}}U\indices{_{\rho]}}
    - \Sigma\indices{_\nu^\rho}\nabla\indices{_{[\mu}}U\indices{_{\rho]}}.
    \label{vorticity_eq_weyss}
\end{equation}
At first glance, \eqref{vorticity_eq_weyss} appears to impose a different constraint than \eqref{EC_tors}. Investigating the consistency of these two equations is necessary, though it falls outside the scope of this work.

\section{Singularity Theorem with Torsion}
\label{Sing_thm_sec}
In order to be able to prove the generalised Singularity
Theorem below for timelike curves with torsion, we need 
to state Wald's proposition \cite[p.\,226]{wald:1984}
in the most general case.
Define the scalar 
\begin{align*}
    C(U\indices{^\mu})\coloneqq \,\,&{\omega^{(g)}}\indices{_{\mu\nu}}{\omega^{(g)}}\indices{^{\mu\nu}}
    + \mathrm{D}\indices{_{\mu}}A\indices{^\mu} + 
    A\indices{_\mu}A\indices{^\mu}\\
    &- \frac{2}{3}\theta^{(g)} 
    S\indices{_\mu}U\indices{^\mu}
    - 2 S\indices{_{\mu\nu\rho}}
    U\indices{^\mu}U\indices{^\nu}A\indices{^\rho}\\
    &+ 2 S\indices{_{\mu\nu\rho}}
    {\sigma^{(g)}}\indices{^{\mu\nu}}U\indices{^\rho}
    -
    2 S\indices{_{\mu\nu\rho}}
    {\omega^{(g)}}\indices{^{\mu\nu}}U\indices{^\rho}\\
    &-2\frac{\mathrm{D}}{\mathrm{D}\tau}\left(
    U\indices{^\alpha}S\indices{_\alpha}\right), \numberthis
\end{align*}
then the Raychaudhuri eq. \eqref{Raychaudhuri_eq}
takes the more tractable form
\begin{equation}
        \frac{\mathrm{d}}{\mathrm{d}\tau}\theta^{(g)} = 
        -\frac{1}{3}{\theta^{(g)}}^2
        -R\indices{_{\mu\nu}}U\indices{^\mu}U\indices{^\nu}
        -{\sigma^{(g)}}\indices{_{\mu\nu}}{\sigma^{(g)}}\indices{^{\mu\nu}} + C(U\indices{^\mu}).
\end{equation}
In this general formulation we have:
\begin{proposition}
Let $(\mathcal{M},g)$ be a spacetime, equipped with an independent
metric compatible affine connection $\Gamma$. Let the Ricci tensor satisfy $R\indices{_{\mu\nu}}\xi\indices{^\mu}\xi\indices{^\nu} \geq C(\xi\indices{^\mu})$
for all timelike $\xi\indices{^\mu}$. Consider a hypersurface orthogonal congruence
of timelike curves emanating from $p \in \mathcal{M}$.
If $\theta_0\coloneqq \theta^{(g)}(\tau_0) < 0$ at some 
point $r\in\mathcal{M}$ at parameter value $\tau_0$ of a member of the congruence in the future 
of $p$, then there exists a point $q$ conjugate to $p$ within
proper time $\tau\leq -3/\theta_0$ from $r$ (assuming the congruence extends that far).
\label{propos_general}
\end{proposition}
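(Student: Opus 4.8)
The plan is to reduce the claim to the single differential inequality that already powered the three case-study propositions, the difference being that here every torsional, acceleration and vorticity contribution has been collected into the scalar $C(U^\mu)$. Starting from the tractable form of the Raychaudhuri equation just displayed,
\[
    \frac{\mathrm{d}}{\mathrm{d}\tau}\theta^{(g)} = -\frac{1}{3}{\theta^{(g)}}^2 - R_{\mu\nu}U^\mu U^\nu - {\sigma^{(g)}}_{\mu\nu}{\sigma^{(g)}}^{\mu\nu} + C(U^\mu),
\]
I would apply the hypothesis $R_{\mu\nu}\xi^\mu\xi^\nu \geq C(\xi^\mu)$ to the timelike tangent field $\xi^\mu = U^\mu$ of the congruence itself, so that $-R_{\mu\nu}U^\mu U^\nu + C(U^\mu) \leq 0$. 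Since the geometric shear ${\sigma^{(g)}}_{\mu\nu}$ carries both indices projected orthogonally to $U^\mu$, it is spacelike and ${\sigma^{(g)}}_{\mu\nu}{\sigma^{(g)}}^{\mu\nu} \geq 0$. These two facts together collapse the evolution equation to
\[
    \frac{\mathrm{d}}{\mathrm{d}\tau}\theta^{(g)} \leq -\frac{1}{3}{\theta^{(g)}}^2.
\]

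The remainder of the argument is then a verbatim repetition of the proof of the first proposition. Choosing $\tau_0 = 0$ and dividing by ${\theta^{(g)}}^2 > 0$ turns the inequality into $\mathrm{d}({\theta^{(g)}}^{-1})/\mathrm{d}\tau \geq 1/3$, which integrates to
\[
    {\theta^{(g)}(\tau)}^{-1} \geq \theta_0^{-1} + \frac{1}{3}\tau.
\]
With $\theta_0 < 0$ the right-hand side starts negative and grows linearly, reaching zero at $\tau = -3/\theta_0$; since $\theta^{(g)}$ stays negative up to that instant, ${\theta^{(g)}}^{-1}$ can only cross zero by $\theta^{(g)}$ diverging, so $\theta^{(g)} \to -\infty$ at some $\tau_q \leq -3/\theta_0$.

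To finish I would invoke the characterisation of conjugate points established earlier: because the connection takes only finite values, $\lim_{\tau\to\tau_q}\theta^{(g)} = -\infty$ holds if and only if $p$ and $q$ are conjugate. The divergence point therefore furnishes the desired conjugate point $q$, reached within proper time $\tau \leq -3/\theta_0$ of $r$. The hypersurface-orthogonality assumption enters only as the standing requirement that makes the congruence foliable and the conjugate-point framework applicable; it is not otherwise needed to secure the inequality, as the vorticity has already been absorbed into $C(U^\mu)$.

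I do not anticipate a genuine obstacle, since all the substantive work is done by results imported from the preceding sections: the torsionful Raychaudhuri equation, the packaging of torsion into $C(U^\mu)$, the spacelike character of ${\sigma^{(g)}}_{\mu\nu}$, and the equivalence between $\theta^{(g)} \to -\infty$ and conjugacy. The only delicate point is bookkeeping — one must be sure to evaluate the energy hypothesis on the congruence's own tangent $U^\mu$ rather than an arbitrary $\xi^\mu$ — after which the proof is a direct transcription of the first proposition with $-R_{\mu\nu}U^\mu U^\nu$ replaced by $-R_{\mu\nu}U^\mu U^\nu + C(U^\mu)$.
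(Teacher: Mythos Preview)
Your proposal is correct and follows precisely the approach the paper intends: the paper does not spell out a separate proof for this general proposition, but by packaging everything into $C(U^\mu)$ and displaying the tractable Raychaudhuri equation just before the statement, it sets things up so that the proof of Proposition~1 carries over verbatim with $\theta$ replaced by $\theta^{(g)}$ and the SEC replaced by $R_{\mu\nu}\xi^\mu\xi^\nu \geq C(\xi^\mu)$. Your derivation of the differential inequality, its integration, and the appeal to the $\theta^{(g)}\to-\infty$ criterion for conjugacy match the paper's template exactly.
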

Note that we are not restricting ourselves to 
autoparallel curves here. 

Another crucial point is that singularity theorems usually have a requirement on causality. In our case
that will be the condition of spacetime being \textit{globally hyperbolic}. This means that spacetime has a Cauchy surface, which per definition is a closed achronal set $\Sigma$
for which $D(\Sigma) = \mathcal{M}$. Achronal means that 
no point in $\Sigma$ belongs to the chronological future of any other point in $\Sigma$, while $D(\Sigma)$ is the domain of dependence of $\Sigma$ which contains roughly speaking all
points in spacetime that are causally connected to $\Sigma$.

The Singularity Theorem for timelike
curves with torsion thus reads:

\begin{theorem*}
Let $(\mathcal{M},g)$ be a globally hyperbolic spacetime, equipped with an independent
metric compatible affine connection $\Gamma$. Let the Ricci tensor satisfy ${R\indices{_{\mu\nu}}\xi\indices{^\mu}\xi\indices{^\nu} \geq C(\xi\indices{^\mu})}$
for all timelike $\xi\indices{^\mu}$.
Assume there exists a smooth spacelike Cauchy surface $\Sigma$
on which $\theta^{(g)}\leq K < 0$ everywhere for a past directed normal timelike congruence, where $K$ is a constant. Then no past directed timelike curve
from $\Sigma$ can have length greater that $3/\abs{K}$.
\end{theorem*}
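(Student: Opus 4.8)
The plan is to run the classical Hawking-type argument \cite{wald:1984} by contradiction, with the essential twist that the congruence must be chosen to be compatible with the \emph{metric} length functional. Suppose, for contradiction, that some past directed timelike curve from $\Sigma$ has length strictly greater than $3/\abs{K}$, terminating at a point $p$. Since the path length $S$ depends only on $g\indices{_{\mu\nu}}$, its extremals are Levi-Civita geodesics; I would therefore take as my congruence the family of past directed timelike Levi-Civita geodesics leaving $\Sigma$ orthogonally. The decisive point --- and the origin of the abstract's emphasis on non-autoparallel curves --- is that in the presence of torsion these metric geodesics are \emph{not} autoparallels: their autoparallel acceleration $A\indices{^\mu} = U\indices{^\beta}\nabla\indices{_\beta}U\indices{^\mu} = K\indices{^\mu_{\alpha\beta}}U\indices{^\alpha}U\indices{^\beta}$ is generically nonzero. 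This is precisely the general, non-autoparallel setting for which \Cref{propos_general} was formulated, and it is why the autoparallel congruence --- for which focusing would be disconnected from arc-length extremality --- cannot be used.

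Next I would apply the Raychaudhuri differential-inequality argument underlying \Cref{propos_general}, but with the Cauchy surface $\Sigma$ playing the role of initial data: setting $\tau = 0$ on $\Sigma$, the hypothesis gives $\theta^{(g)}(0) \leq K < 0$, while the congruence is hypersurface orthogonal because its metric-normal distribution is integrable (the Cauchy foliation), so the Frobenius condition of \Cref{Hypersur_orthog_sec} holds. The energy condition $R\indices{_{\mu\nu}}\xi\indices{^\mu}\xi\indices{^\nu} \geq C(\xi\indices{^\mu})$ together with ${\sigma^{(g)}}\indices{_{\mu\nu}}{\sigma^{(g)}}\indices{^{\mu\nu}} \geq 0$ collapses the Raychaudhuri equation to $\mathrm{d}\theta^{(g)}/\mathrm{d}\tau \leq -\tfrac{1}{3}{\theta^{(g)}}^2$, whose integration forces $\theta^{(g)} \to -\infty$ at some $\tau_f \leq 3/\abs{K}$. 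By the analysis of \Cref{Conj_points_sec} this divergence is equivalent to $\det(E) \to 0$, i.e.\ to a focal point of $\Sigma$ along the geodesic.

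The crucial bridge is then to identify this focal point with a genuine \emph{metric} focal point, since only the latter is governed by the second variation of arc length. Because the connection --- hence the torsion $S\indices{_\alpha}$ and the contortion --- is assumed to take only finite values, the relations $\theta^{(g)} = \theta - 2U\indices{^\alpha}S\indices{_\alpha}$ and $\theta = \mathring{\theta} + K\indices{^\mu_{\rho\mu}}U\indices{^\rho}$ differ by quantities that stay bounded as the congruence degenerates. Consequently $\theta^{(g)} \to -\infty$ holds if and only if the Levi-Civita expansion $\mathring{\theta} \to -\infty$, so the point found above is a metric focal point of $\Sigma$. I would then invoke the standard Lorentzian result \cite{wald:1984} that a timelike geodesic normal to $\Sigma$ fails to maximise length beyond its first focal point to $\Sigma$.

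Finally, global hyperbolicity closes the argument: since $\Sigma$ is Cauchy, there exists a maximal-length past directed timelike curve from $\Sigma$ to $p$, and this maximiser is a Levi-Civita geodesic meeting $\Sigma$ orthogonally, of length at least that of the assumed curve and hence strictly greater than $3/\abs{K}$. This geodesic belongs to the normal congruence, so it acquires a focal point to $\Sigma$ at $\tau_f \leq 3/\abs{K}$, placing $p$ strictly beyond that focal point; by the previous step it cannot be maximising all the way to $p$, a contradiction. I expect the genuine difficulty to lie not in the Raychaudhuri estimate but in the third paragraph: the torsionful focusing condition $\theta^{(g)} \to -\infty$ must be shown to coincide with the purely metric focal points that control arc-length maximality, and it is exactly this reconciliation --- together with the fact that Levi-Civita geodesics are non-autoparallel once torsion is switched on --- that makes the use of non-autoparallel curves indispensable.
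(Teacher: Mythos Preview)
Your argument is correct and follows the same route as the paper: argue by contradiction, use global hyperbolicity to obtain a maximal-length curve from $\Sigma$ to $p$ (a Levi-Civita geodesic, hence non-autoparallel in the presence of torsion), then apply the focusing argument of \Cref{propos_general} to produce a focal point within $3/\abs{K}$, contradicting maximality via Wald's Theorems~9.4.3 and~9.4.5. You are in fact more explicit than the paper on two points it either glosses over or defers to a post-proof remark --- the need for the maximising congruence to consist of metric geodesics rather than autoparallels, and the bridge identifying $\theta^{(g)} \to -\infty$ with a genuine \emph{metric} focal point via the boundedness of the contortion terms relating $\theta^{(g)}$, $\theta$, and $\mathring{\theta}$.
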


\begin{proof}
    The proof proceeds as in \cite[p.\,237]{wald:1984}, however with some subtleties. Assume there exists a
    past-directed timelike curve $\gamma$ from $\Sigma$
    with length greater than $3/\abs{K}$. Let $p$ denote
    a point on $\gamma$ lying at a distance beyond 
    $3/\abs{K}$ from $\Sigma$. Then by \cite[Thm. 9.4.5]{wald:1984} there
    exists a maximum length curve $\tilde{\gamma}$ from 
    $\Sigma$ to $p$. Certainly $\tilde{\gamma}$ must also
    have length greater than $3/\abs{K}$ and further be timelike (since only timelike curves can be homotopic to other timelike curves). But then we know \cite[Thm. 9.4.3]{wald:1984} that
    $\tilde{\gamma}$ must be a timelike geodesic with no
    conjugate point between $\Sigma$ and $p$. This however
    contradicts Proposition\,\,\autoref{propos_general}.
    Hence the assumption is wrong and $\gamma$ cannot exist.
\end{proof}

Two remarks are in order.
First, in the presence of a general torsion tensor, we are compelled to consider non-autoparallel curves, i.e., $U\indices{^\alpha}\nabla\indices{_\alpha}
U\indices{^\mu} \neq 0$, during the step of connecting
$\Sigma$ and $p$ with a maximal length curve. The connecting
curve is a geodesic but should in general not be considered 
an autoparallel, since that would imply torsion
to be totally antisymmetric (recall that the totally antisymmetric
torsion case has been shown to be equivalent to the standard
case with no torsional contributions).
Second, the deviation of the case above to the standard case
without torsion is essentially encoded within $C(\xi\indices{^\mu})$ (apart from the hidden dependence
of the Ricci tensor on torsion). 
For the three cases studied previously, $C$
admits the following forms:
\begin{itemize}
    \item Totally antisymmetric torsion:
        \begin{equation}
            C(U\indices{^\mu}) = - 
            S\indices{^\alpha_{\mu\nu}}S\indices{^\beta^{\mu\nu}}
            U\indices{_\alpha}U\indices{_\beta}
        \end{equation}
    \item Vectorial torsion:
        \begin{equation}
            C(U\indices{^\mu}) = - \frac{2}{3}\theta^{(g)}S\indices{_\mu}U\indices{^\mu}
            -2\frac{\mathrm{D}}{\mathrm{D}\tau}(S\indices{_\mu}U\indices{^\mu})
        \end{equation}
    \item  Weyssenhoff fluid:
        \begin{equation}
            C(U\indices{^\mu}) = \Sigma\indices{_{\mu\nu}}\Sigma\indices{^{\mu\nu}}
        \end{equation}
\end{itemize}

The physical interpretation of the modified SEC, ${R\indices{_{\mu\nu}}\xi\indices{^\mu}\xi\indices{^\nu} \geq C(\xi\indices{^\mu})}$ in the theorem, invites further examination. As currently stated, this condition's implication is restricted to the convergence of timelike curves. Broader physical implications concerning the nature of matter cannot be deduced from such a general assumption. To extract such implications, it is essential to impose a specific set of gravitational field equations. This is particularly evident in the case of a Weyssenhoff fluid, where the underlying theory is assumed to be the ECSK theory. In that scenario, we were able to relate torsion to the spin-density of matter, thereby translating the modified SEC into a physically meaningful condition—specifically, that the gravitational attraction at a given point \cite{Curiel:2014zba} must be greater than or equal to the spin-density at that point. Moreover, in this model, if the spin-density is sufficiently large to violate the modified SEC, the formation of singularities is prevented. In this sense, torsion can be understood as exerting a counterbalancing effect that inhibits the development of singularities.

\section{Conclusion}
\label{sect:Concl}
This work established the foundation for formulating and ultimately proving a singularity theorem for timelike curves with torsion. Although the Raychaudhuri equation incorporating torsion has been already obtained in the literature, we presented an alternative approach by initially deriving the general deviation equation for timelike curves.
The deviation equation furthermore allowed us to generalise the concept of conjugate points, and to connect the appearance of conjugate points to the behaviour of the scalar 
expansion. 
Based on compatibility with standard cosmology, we opted to impose the condition of hypersurface orthogonality.
We observed that this condition necessitates the presence of a non-vanishing vorticity tensor, which is in stark contrast to the standard case without torsion.
For several specific torsion forms, we subsequently established modified SECs under which the emergence of conjugate points is assured. In particular we found that 
in the case of a totally antisymmetric torsion tensor, all
torsional contributions cancel out.
For the case of a Weyssenhoff fluid we discovered that 
vorticity is exactly equal to the spin-density of matter, and consequently to torsion, which facilitates the general consensus that torsion 
constitutes a twisting of spacetime.
Finally we formulated and proved a singularity theorem for 
timelike curves with torsion. We observed that in the case where torsion is not totally antisymmetric, the inclusion of non-autoparallel curves is essential for incorporating the contributions of torsion. In other words, the autoparallel acceleration terms involving $A\indices{^\mu}$ must be integrated into the Raychaudhuri equation.

While the formulation of singularity theorems represents an essential initial step, the ultimate objective is to circumvent the conditions imposed by these theorems to prevent the occurrence of singularities. The scalar $C$ introduced in the previous section fundamentally indicates how torsion can violate the modified SECs and, consequently, which torsional forms are physically feasible.
We anticipate a more thorough investigation of this matter in future research.

Furthermore, we would like to emphasise that the status of ECs remains uncertain. While they serve as useful guidelines for physically reasonable scenarios, it is evidently necessary to violate them in certain cases to prevent unphysical behavior. Thus, a reevaluation of the standing and interpretation of these ECs is warranted, as highlighted in 
\cite{Curiel:2014zba}.

\section*{Acknowledgements}
The authors wish to thank Jürgen Struckmeier, Johannes Kirsch, Vladimir Denk, Marcelo Netz-Marzola, David Benisty and Tomoi Koide for valuable discussions.
Special thanks are extended to the ``Walter Greiner-Gesell\-schaft zur F\"{o}rderung
der physi\-ka\-lischen Grundlagenforschung e.V.'' (WGG) in Frankfurt for their support.\,
AvdV and DV gratefully acknowledge the support provided by the Fueck Stiftung.

\appendix

\section{Extending the Deviation Vector Field}
\label{sect_flow_calc}
This section elaborates on the intermediate steps between  
\eqref{ext_def} and \eqref{ext_res}. First act with \eqref{ext_def}
on an arbitrary smooth function $f$ on $\mathcal{M}$.
Then, per definition of the pushforward, the r.h.s. yields
\begin{equation}
    \left((\phi_{\tau})_* \left({\hat{V}}_{\gamma(s)}\right)\right)(f) = 
    {\hat{V}}_{\gamma(s)}(f\circ\phi_\tau).
\end{equation}
Choose any local frame $x$, then we may write
\begin{equation}
    {\hat{V}}_{\gamma(s)}(f\circ\phi_\tau) = 
    {\hat{V}}_{\gamma(s)}^\mu \left(\frac{\partial}{\partial x^\mu}\right)_{\gamma(s)}(f\circ\phi_\tau),
\end{equation}
where $ \left(\partial/\partial x^\mu\right)_{\gamma(s)}$ denotes the coordinate basis of the
tangent space at $\gamma(s)$. The action of the coordinate
basis is given by
\begin{equation}
    \left(\frac{\partial}{\partial x^\mu}\right)_{\gamma(s)}(f\circ\phi_\tau) = \frac{\partial}{\partial x^\mu}
    \left.\left((f\circ \phi_\tau)\circ x^{-1}\right)\right\rvert_{x(\gamma(s))}.
    \label{AppA_helpeq1}
\end{equation}
Now choose another frame $y$, rewrite 
$(f\circ \phi_\tau)\circ x^{-1} = f\circ y^{-1} \circ( y\circ \phi_\tau\circ x^{-1})$ and use the chain rule to find
that the r.h.s. of \eqref{AppA_helpeq1} becomes
\begin{equation}
    \left.\frac{\partial (f\circ y^{-1})}{\partial y^\nu}\right\rvert_{y(\gamma_s(\tau))}
    \left.\frac{\partial(y^\nu \circ \phi_\tau \circ x^{-1})}{\partial x^\mu}\right\rvert_{x(\gamma(s))},
\end{equation}
which is equal to 
\begin{equation}
    \left.\frac{\partial(y^\nu \circ \phi_\tau \circ x^{-1})}{\partial x^\mu}\right\rvert_{x(\gamma(s))}
    \left(\frac{\partial}{\partial y^\nu}\right)_{\gamma_s(\tau)}(f).
    \label{AppA_helpeq2}
\end{equation}
The contraction of \eqref{AppA_helpeq2} with
\begin{equation}
    {\hat{V}}_{\gamma(s)}^\mu \overset{\text{def}}{=} \frac{\mathrm{d} {\gamma}\indices{^{\mu}}}{\mathrm{d} s}(s) = \frac{\mathrm{d} {(x^\mu \circ\gamma)}}{\mathrm{d} s}(s),
\end{equation}
and the use of the chain rule yield
\begin{equation}
    \frac{\mathrm{d}}{\mathrm{d}s}\left.\left(y^\nu \circ \phi_\tau \circ x^{-1}\circ x \circ \gamma\right)\right\rvert_{s}\left(\frac{\partial}{\partial y^\nu}\right)_{\gamma_s(\tau)}(f),
\end{equation}
or 
\begin{equation}
    \frac{\partial}{\partial s}\left.\left(y^\nu \circ  \gamma_s (\tau)\right)\right\rvert_{s}\left(\frac{\partial}{\partial y^\nu}\right)_{\gamma_s(\tau)}(f),
\end{equation}
which is equivalent to
\begin{equation}
    \frac{\partial {\gamma_s}\indices{^{\nu}}(\tau)}{\partial s}\left(\frac{\partial}{\partial y^\nu}\right)_{\gamma_s(\tau)}(f).
    \label{AppA_helpeq3}
\end{equation}
Comparing \eqref{AppA_helpeq3} with the l.h.s. of \eqref{ext_def}:
\begin{equation}
    {\tilde{V}}_{\gamma_s (\tau)}^\nu\left(\frac{\partial}{\partial y^\nu}\right)_{\gamma_s(\tau)}(f),
\end{equation}
we finally find
\begin{equation}
    {\tilde{V}}_{\gamma_s (\tau)}^\mu = 
    \frac{\partial {\gamma_s}\indices{^{\mu}}(\tau)}{\partial s}.
\end{equation}

\section{Energy Conditions}
\label{sect_en_cond}
Energy conditions are generically posed, independent of details of matter source, conditions on the behaviour of matter on a dynamic curved space. Since these conditions are independent of the idiosyncrasies of matter fields, these conditions typically, in conjunction with Einstein's field equations, lead to far-reaching consequences in determining the global structure of spacetime manifolds and the nature of the gravitational field, for example, through singularity theorems. If one assumes the energy-momentum tensor of the matter fields to be of Hawking Ellis type I \cite{hawking1975large} and, further, utilises the Einstein field equations, then it is possible to propose these conditions in various forms, see Table\,\ref{EC_table}. 
\begin{table*}
\caption{Collection of the most prominent ECs in their geometric, physical, and effective form \cite{Curiel:2014zba}. From top to bottom we listed the null-, weak-, strong-, dominant-, and strengthened dominant EC. The vector $k\indices{^a}$ always refers to an arbitrary null vector, whereas $\xi\indices{^a}$ and $\eta\indices{^a}$ are arbitrary timelike vectors. The effective forms hold for all $i=1,2,3$}
\begin{ruledtabular}
\begin{tabular}{l|c|c|c}
& \textbf{Geometric} & \textbf{Physical} & \textbf{Effective}\\
\hline\hline
NEC & $R\indices{_{ab}} k\indices{^a} k\indices{^b} \geq 0$  & $T\indices{_{ab}} k\indices{^a} k\indices{^b} \geq 0$ & $\rho + p_i \geq 0$\\\hphantom{.} &\hphantom{.}&\hphantom{.}&\hphantom{.}\\ 
WEC & $G\indices{_{ab}} \xi\indices{^a} \xi\indices{^b} \geq 0$ & $T\indices{_{ab}} \xi\indices{^a} \xi\indices{^b} \geq 0$ & $\rho \geq 0, \,\,\rho + p_i \geq 0$\\\hphantom{.} &\hphantom{.}&\hphantom{.}&\hphantom{.} \\
SEC & $R\indices{_{ab}} \xi\indices{^a} \xi\indices{^b} \geq 0$ & $(T\indices{_{ab}} - \frac{1}{2}T g\indices{_{ab}}) \xi\indices{^a} \xi\indices{^b} \geq 0$ & \makecell{$\rho + \sum_{j}p_j\geq 0$, \vspace*{2mm}\\ $\rho + p_i \geq 0$}\\\hphantom{.} &\hphantom{.}&\hphantom{.}&\hphantom{.} \\
DEC & \makecell{$G\indices{_{ab}} \xi\indices{^a} \xi\indices{^b} \geq 0, \,\, G\indices{^a_{b}} \xi\indices{^b}$ is causal, \vspace*{2mm}\\ $G\indices{_{ab}} \xi\indices{^a} \eta\indices{^b} \geq 0$ for
$\xi\indices{^a}, \eta\indices{^a}$ co-oriented} & \makecell{$T\indices{_{ab}} \xi\indices{^a} \xi\indices{^b} \geq 0, \,\, T\indices{^a_{b}} \xi\indices{^b}$ is causal, \vspace*{2mm}\\ $T\indices{_{ab}} \xi\indices{^a} \eta\indices{^b} \geq 0$ for
$\xi\indices{^a}, \eta\indices{^a}$ co-oriented}& $\rho \geq 0, \,\, \abs{p_i}\leq \rho$\\\hphantom{.} &\hphantom{.}&\hphantom{.}&\hphantom{.} \\
SDEC & \makecell{$G\indices{_{ab}} \xi\indices{^a} \xi\indices{^b} \geq 0, \,\, G\indices{^a_{b}} \xi\indices{^b}$ is timelike if $R\indices{_{ab}} \neq 0$, \vspace*{2mm}\\ $G\indices{_{ab}} \xi\indices{^a} \eta\indices{^b} > 0$ for
$\xi\indices{^a}, \eta\indices{^a}$ co-oriented or $G\indices{_{ab}} = 0$} & \makecell{$T\indices{_{ab}} \xi\indices{^a} \xi\indices{^b} \geq 0, \,\, T\indices{^a_{b}} \xi\indices{^b}$ is timelike if $T\indices{_{ab}} \neq 0$, \vspace*{2mm}\\ $T\indices{_{ab}} \xi\indices{^a} \eta\indices{^b} > 0$ for
$\xi\indices{^a}, \eta\indices{^a}$ co-oriented or $T\indices{_{ab}} = 0$}  & $\rho \geq 0, \,\, \abs{p_i}\leq \rho$
\end{tabular}
\end{ruledtabular}
\label{EC_table}
\end{table*}
One can propose a form involving the curvature of spacetime (geometric form), energy-momentum tensor (physical form), or the energy density and pressure of the matter fields (effective form) \cite{Curiel:2014zba}. However, it is difficult to suggest one form or the other to be fundamental. For example, the weak energy condition only has a clear physical justification through its physical form. On the other hand, one may suggest that only covariant (observer-independent) formulations of the conditions should be considered fundamental.


\bibliography{biblio}

\end{document}